\title{Parameterized String Equations}
\author{Laurent Bulteau}{LIGM, CNRS, Université Gustave Eiffel, Marne-la-Vallée France}{laurent.bulteau@univ-eiffel.fr}{https://orcid.org/0000-0003-1645-9345}{}
\author{Michael R. Fellows}{Department of Informatics
University of Bergen
Bergen, Norway
}{michael.fellows@uib.no}{}{}
\author{Christian Komusiewicz}{Fachbereich für Mathematik und Informatik, Philipps-Universität Marburg, Germany}{komusiewicz@informatik.uni-marburg.de}{https://orcid.org/0000-0003-0829-7032}{}
\author{Frances Rosamond}{Department of Informatics
University of Bergen
Bergen, Norway}{frances.rosamond@uib.no}{}{}
\authorrunning{L. Bulteau, M.\,R. Fellow, C. Komusiewicz and F.\,A. Rosamond}
\keywords{String Equations, String Morphism, Parameterized Algorithms, Parameterized Complexity} 
\newcommand{\splitsIn}{\equiv}
\newcommand{\joker}{{*}}
\newcommand{\XP}{{\sf XP}\xspace}
\newcommand{\NP}{{\sf NP}\xspace}
\newcommand{\FPT}{{\sf FPT}\xspace}
\newcommand{\Wone}{{\sf W[1]}\xspace}
\newcommand{\Wtwo}{{\sf W[2]}\xspace}
\newcommand{\Pclass}{{\sf P}\xspace}
\newcommand{\Blocks}{{\mathcal B}}
\newcommand{\Ecal}{{\mathcal E}}
\newcommand{\Oh}{{\mathcal O}}
\newcommand{\Tgt}{T}
\newcommand{\Tgtlen}{t}
\newcommand{\Src}{{\mathcal X}}
\newcommand{\src}{X}
\newcommand{\leftpart}{\text{\sf pre}}
\newcommand{\rightpart}{\text{\sf suf}}
\newcommand{\dunderline}[3][-4pt]{{%
  \sbox0{#3}%
  \ooalign{\copy0\cr\rule[\dimexpr#1-#2\relax]{\wd0}{#2}}}}
\newcommand{\colorunderline}[3][-4pt]{{\color{#2}\dunderline[#1]{1.5pt}{{\color{black} \ensuremath{#3}}}}}
\newcommand{\ur}[1]{\colorunderline{red!80!white}{#1}}
\newcommand{\ub}[1]{\colorunderline{blue!30!black}{#1}}
\newcommand{\ug}[1]{\colorunderline{green}{#1}}
\newcommand{\ugd}[1]{\colorunderline[-6pt]{green}{#1}}
\begin{document}

\maketitle

\begin{abstract}
We study systems of \emph{String Equations} where block variables need to be assigned strings so that their concatenation gives a specified target string. We investigate this problem under a  multivariate complexity framework, searching for tractable special cases  such as systems of equations with few block variables or few equations. Our main results include a polynomial-time algorithm for size-2 equations, and hardness for size-3 equations, as well as hardness for systems of two equations, even with tight constraints on the block variables. 
We also study a variant where few deletions are allowed in the target string, and give XP algorithms in this setting when the number of block variables is constant.

\end{abstract}

\section{Introduction}

\emph{String equations} are equations in which the variables, called \emph{block variables} in this work, can be assigned arbitrary strings. We study the case where a string equation can express that the concatenation of a given list of blocks must equal a specific string. We study the problem of finding a solution for a given set of string equations within a multivariate complexity framework, searching for tractable special cases  such as systems of equations with few blocks variables or few equations.

\paragraph*{Related Work}
The single-equation problem can be seen as a \emph{String Morphism} question, for which many variants have been extensively studied in the literature (see Fernau et al.~\cite{FSV16} for an overview of related works). A possible starting point is the \emph{pattern discovery} problem \cite{angluin1979finding}: given a set of strings, find a pattern (i.e. a string of characters and variables), of minimal length, that may generate each input string by assigning values to each block variable. Deciding if a given pattern generates a string can be seen, in our setting, as a single string equation.  Angluin \cite[Theorem~3.2.3]{angluin1979finding} proves that this problem is \NP-hard, by reduction from SAT. She also identifies the brute-force $O(n^k)$ algorithm (which requires linear space~\cite{IBARRA1995179}). Fernau et al.~\cite{FS15, FSV16} further prove (among other variants) \Wone-hardness when parameterized by the number of block variables. A notable variant is the \emph{injective} case (where blocks must be assigned distinct strings) which in its simplest version can be formulated as the following \NP-hard problem~\cite{FMMS20}: given a string $T$ and an integer $k$, can $T$ be split into $k$ distinct factors?  

For two equations, if the blocks are used once in each equation, the {\sc String Equations} formulation can be related to {\sc Minimum Common String Partition} (where two target strings are given, but the block variables can be permuted in any order). Interestingly, this problem is \FPT for the number of blocks~\cite{BK14} but becomes \Wone-hard if the permutation of the blocks is fixed (Theorem~\ref{prop:W-hard-2eq-emptyBlocks}).

\paragraph*{Formal definitions}

Given a string $S$, we write $|S|$ for the length of $S$, and $S[i]$ for the $i$th letter of $S$ (for $1\leq i\leq |S|$). We use $\prod_{i=1}^{n} S_i:= S_1 S_2 \ldots S_n$ to denote the concatenation of strings~$S_1$ to~$S_n$. 
We distinguish \emph{substrings} (obtained by taking consecutive characters from $s$) from \emph{subsequences} (obtained by taking not-necessarily consecutive characters from $s$).

Given an alphabet $\Sigma$ and a set $\Blocks$ of \emph{block variables} (also called \emph{blocks} for simplification), with $\Sigma$ and $\Blocks$ being disjoint, 
a \emph{string equation} over $(\Sigma,\Blocks)$ is a pair $(\Tgt,\Src)$, written $\Tgt\splitsIn \Src$, where $\Tgt$ (resp. $\Src$) is a non-empty string over $\Sigma$ (resp. $\Blocks$). $\Tgt$ is the \emph{target string} of the equation. 
A \emph{system of equations} is  a set of string equations.
A block which appears only once in a given system of equations is said to be a \emph{joker}, and is denoted $\joker$. A system of equations is \emph{duplicate-free} if no block appears twice in the same string $\Src$.

An \emph{assignment} of the block variables is a function $\sigma:\Blocks\rightarrow \Sigma^+$ that assigns to each block variable a non-empty string over $\Sigma$. 
For $\Src=\src_1\src_2\ldots \src_c\in \Blocks^+$, we write $\sigma(\Src)=\prod_{i=1}^c\sigma(\src_i)$.
An equation $\Tgt\splitsIn \Src$ is \emph{satisfied} by an assignment $\sigma$ if $\Tgt=\sigma(\Src)$, that is, replacing $\Src$ by the concatenation of the corresponding strings yields exactly $\Tgt$.
A system of equations is satisfied if all equations are satisfied.
An example is given in \Cref{fig:basic_example}.

\begin{figure}

\begin{align*}
 \ur{a\,b\,c}\,\ug{a\,b} &\splitsIn \ur A\,\ug B       
 &
 \sigma(A)&:= \ur{a\,b\,c} \\
 \ur{a\,b\,c}\,\ub{d}\,\ur{a\,b\,c}\,\ub{d} & \splitsIn \ur A\,\ub C\,\ur A\,\ub C 
 &
 \sigma(B)&:= \ug{a\,b} \\
 \ug{a\,b}\,\ub{d} &\splitsIn \ug B\,\ub C         
 &
 \sigma(C)&:= \ub{d}
\end{align*}
    \caption{\label{fig:basic_example} Left: a string equation system with alphabet $\Sigma=\{a,b,c,d\}$ and block variables $\Blocks=\{A,B,C\}$. Right: a solution (assignment) $\sigma$ for this system, with the corresponding substrings  highlighted with different colors in the target strings.
}

\end{figure}

\paragraph*{The String Equations Problem and its Variants}
We define the {\sc String Equations} problem as follows: 

\noindent{\bf Input:}  A system of equations $\Ecal=\{\Tgt_1\splitsIn \Src_1, \Tgt_2\splitsIn \Src_2, \ldots, \Tgt_r\splitsIn \Src_r\}$.
\\
\noindent{\bf Question:} Does $\Ecal$ admit a satisfying assignment $\sigma$?

We consider the \emph{duplicate-free} restriction of {\sc String Equations}, where no block may appear twice in any $\Src_i$. We also consider a more general problem where a number of deletions is allowed, {\sc String Equations with Deletions}.  

\noindent{\bf Input:}  A system of equations $\Ecal=\{\Tgt_1\splitsIn \Src_1, \Tgt_2\splitsIn \Src_2, \ldots, \Tgt_r\splitsIn \Src_r\}$, an integer $d$\\
\noindent{\bf Question:} Does there exist 
$r$ strings $\Tgt_1',\ldots,\Tgt_r'$ obtained from $\Tgt_1,\ldots,\Tgt_r$ by removing a total of at most $d$ letters, such that the equations $\Tgt_i'\splitsIn \Src_i$ admit a satisfying
assignment $\sigma$?

Finally, motivated by the similarities with the {\sc Minimum Common String Partition} problem, we are interested in the restriction with $r=2$ duplicate-free equations without joker blocks. In this setting, we consider the variation where blocks may be assigned empty strings (remember that by default, $\sigma$ must be \emph{non-erasing}, i.e. $\sigma(\Tgt)$ is not empty), see \Cref{prop:W-hard-2eq-emptyBlocks}.

Since the size of each $\sigma(A)$ can be bounded by the input size, and it is trivial to verify whether an equation is satisfied by a given assignment, {\sc String Equations} is clearly in {\sf NP}. Similarly, {\sc String Equations with Deletions} and all variants mentioned here are also clearly in {\sf NP}.

\paragraph*{Parameterized Complexity}
For an introduction to parameterized complexity theory, we refer to the monograph of Downey and Fellows~\cite{DowneyF13}.
An instance~$(x,\kappa)$ of a parameterized problem consists of a classical problem instance~$x$ and a parameter~$\kappa \in \mathds{N}$. A parameterized problem is \emph{fixed-parameter tractable}. or equivalently is in \FPT, if there exists an algorithm that solves every instance~$(x,\kappa)$ in~$f(\kappa)\cdot |x|^{\Oh(1)}$~time. A parameterized problem has an \XP-algorithm, or equivalently is in \XP, if there exists an algorithm that solves every instance~$(x,\kappa)$ in~$|x|^{g(\kappa)}$~time. To show that problems in \XP are unlikely to be in \FPT, one may use parameterized reductions from \Wone-hard problems, i.e., polynomial-time reduction between parameterized problems such that the parameter of the reduced instance is a function of the orginal parameter.

We use the following classic \Wone-hard problem in our reduction.  

\noindent {\sc Clique}\\
\noindent{\bf Input:}  A graph $G=(V,E)$\\
\noindent{\bf Question:} Does~$G$ contain a clique of size $\kappa$?

In addition, we use a constrained version of the problem, where the vertices are colored so that a clique must use exactly one vertex with each color :

\noindent {\sc Multicolored Clique}\\
\noindent{\bf Input:}  A $\kappa$-partite graph $G=(V=\bigcup_{i=1}^{\kappa} V_i,E)$. 
\\
\noindent{\bf Question:} Does~$G$ contain a clique of size $\kappa$ in $G$?

{\sc Multicolored Clique} is \Wone-hard for parameter $\kappa$~\cite{FHRV09,Pietrzak03}. An example instance is given in Figure~\ref{fig:graph}, it will be used as an example for our reductions. By convention we write $n=|V|$, $m=|E|$, $V=\{v_1,\ldots,v_n\}$, $E=\{e_1,\ldots, e_m\}$. For convenience, we assume that edges are size-2 strings over alphabet $V$, that is, $e_j=v_s v_t$ for an edge $\{v_s,v_t\}$ with $s<t$. Moreover, we assume that $E$ is sorted lexicographically, that is, for $e_j=v_s v_t$ and $e_{j'}=v_{s'}v_{t'}$ we have $j<j'$ iff $s<s'$ or $s=s'$ and $t<t'$. 

\begin{figure} 
  \centering
  \begin{tikzpicture}[yscale=0.6, node distance= 1em, draw, circle,inner sep=4pt, line width=0pt]
    \node[fill=red!20] (a) at ( 110:1) {};   \node at (a) {$a$};    
    \node[fill=green!15] (b) at (210:1) {};   \node at (b) {$c$};    
    \node[fill=blue!40] (c) at (330:1) {};   \node at (c) {$d$};    
    \node[fill=red!20] (d) at (70:1) {};   \node at (d) {$b$};    
    \draw (a) -- (b);
    \draw (a) -- (c);
    \draw (b) -- (c);
    \draw (b) -- (d);
    \draw (c) -- (d);
  \end{tikzpicture}
  \caption{\label{fig:graph} Running example for reductions from {\sc Clique} (or {\sc Multi-colored Clique}): a (3-colored) graph $G=(V,E)$ where $V=\{a,b,c,d\}$ and $E=\{ab,ac,bc,bd,cd \}$. For $\kappa=3$, $G$ contains two triangles: $\{a,b,c\}$ and $\{b,c,d\}$. 
  }
\end{figure}
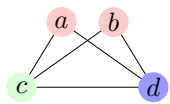

\section{Overview of the results}

In our complexity study, we consider the following five quantities: 
\begin{itemize}
    \item $\Tgtlen$, the maximum size of any target string 
    \item $r$, the number of equations
    \item $c$, the maximum number of blocks per equation,
    \item $k$, the overall number of blocks
    \item $d$, the number of deletions (for {\sc String Equations with Deletions})    
\end{itemize}
We only only consider $r$, $c$, $k$ and $d$ as possible parameters (i.e. we always assume $\Tgtlen$ to be unbounded). Observe that $k\leq rc$. Consequently, a positive result for parameter~$k$ implies a positive result for~$rc$ and a running time lower bound for~$r+c$ implies a running time lower bound for~$k$.  An overview of our results is given in Table~\ref{tab:results}.

We first show in Section~\ref{sec:constant_k} that the problem is polynomial-time solvable when $k$ is any constant (i.e. \XP for $k$, \Cref{prop:XP}) using brute-force enumeration. This raises the possibility of an \FPT algorithm for parameter $k$, which will be ruled out in the next sections, with hardness results proving \Wone-hardness for parameter $k$ even in more restricted settings.  
We also consider the deletions variant: our \XP algorithm extends to parameter $k+d$ (\Cref{prop:del+k-XP}), but with parameter $k$ alone, we show that it is already \NP-hard for $k=1$ (\Cref{prop:del-hard-k=1}). However, it is in \XP for parameter $r+c$ (\Cref{prop:del-rc-XP}).

We then focus, in Section~\ref{sec:constant_r}, on the number $r$ of equations. \Wone-hardness is already known for $r=1$ and parameter $c$~\cite{angluin1979finding, FSV16}. 
We prove a similar result in the duplicate-free setting when $r=2$ ($r=1$ is a trivial case for duplicate-free equations).  Note that \Wone-hardness follows for parameter $k$, since $k\in \Oh(c)$ for constant $r$. We consider two slightly stronger cases: one where both target strings are equal, and the other where block variables may be assigned empty strings.

Finally, Section~\ref{sec:constant_c} is devoted to small-size equations (constant $c$). On the negative side, {\sc String Equations} remains \Wone-hard for parameter $r$ (and $k$) when $c=3$. On the positive side, we give a polynomial-time algorithm for size-2 equations, which generalizes to the case where non-joker blocks must all be prefixes or suffixes of their equations. 
As a final attempt towards generalizing this algorithm to other cases, we consider equation systems with $(r-1)$ size-2 equations together with a single size-$c$ duplicate-free equation. We show, however, that this special case is \Wone-hard for parameter $r+c$ (\Cref{prop:W-hard-all-but-one-size2}).


\begin{table}
    \centering
    \begin{tabular}{l|ccccc|lr}    
   &$k$ & $r$& $c$ &$d$ & &\multicolumn{2}{l}{Complexity of {\sc String Equations} }
\\ \hline \multirow{6}{3cm}{Constant number of blocks (Section~\ref{sec:constant_k})}
&C& $\star$&  $\star$& 0 & & \XP &(\Cref{prop:XP})
\\ &C& $\star$&  $\star$& C &&  \XP& (\Cref{prop:del+k-XP})
\\ &C& $\star$&  $\star$& P &&  \multicolumn{2}{l}{{\em open} 
} 
\\ &1&  1&  P& $\star$ &&  \Wtwo-hard& (\Cref{prop:del-hard-k=1})
\\ &1&  P&  1& $\star$ &&  \Wtwo-hard&(\Cref{prop:del-hard-k=1})
\\ &C&  C&  C& $\star$ &&  \XP&(\Cref{prop:del-rc-XP})
 \\ \hline \multirow{2}{3cm}{Constant number of equations (Section~\ref{sec:constant_r})}
 & P&  1&  P& 0 &&  \Wone-hard & \cite{angluin1979finding, FSV16} (see also  \Cref{prop:W-hard-1eq})
\\ &P&  2&  P& 0 &$^\dagger$ &  \Wone-hard & (\Cref{prop:W-hard-2eq,prop:W-hard-2eq-emptyBlocks}) 
\\ \hline \multirow{4}{3cm}{Constant equation sizes (Section~\ref{sec:constant_c})}
&$\star$&  $\star$&  2& 0 &$^\dagger$  & \Pclass  &(\Cref{prop:P-border-blocks}) 
\\ & $\star$&  $\star$&  2& P &&  \multicolumn{2}{l}{{\em open} 
} 
\\ &P&  P&  3& 0 &&  \Wone-hard& (\Cref{prop:W-hard-size3})
\\ &P&  P&  P& 0 &$^\dagger$ &  \Wone-hard & (\Cref{prop:W-hard-all-but-one-size2}) 
    \end{tabular}
    \caption{Summary of our results . For each of $k$ (number of block variables), $r$ (number of equations), $c$ (equation size) and $d$ (number of deletions), we indicate if the result holds for the given integer, for any fixed constant (C, for \XP algorithms), when seen as a parameter (P, for \FPT or \Wone-hardness) or is unbounded ($\star$). \\$^\dagger$ These results still hold in a  stronger setting: \Cref{prop:W-hard-2eq} holds for duplicate-free equations  with a unique target. \Cref{prop:W-hard-2eq-emptyBlocks} holds for duplicate-free equations allowing empty blocks. \Cref{prop:P-border-blocks} holds for arbitrarily large equations, provided non-border blocks are all jokers. \Cref{prop:W-hard-all-but-one-size2} holds for duplicate-free equations, and all equations except one have size 2. }
    \label{tab:results}
\end{table}
\section{Constant number of blocks} \label{sec:constant_k}
A straightforward brute-force algorithm that guesses endpoints of each block shows that {\sc String Equations} is in \XP.
\begin{proposition} \label{prop:XP}
{\sc String Equations} can be solved in $\Oh^*(\Tgtlen^{2k})$ time.
\end{proposition}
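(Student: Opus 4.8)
The plan is to show that a satisfying assignment, if one exists, is completely determined by the choice of the endpoints of each block occurrence inside the target strings, and that there are only $\Oh^*(\Tgtlen^{2k})$ such choices to enumerate.

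First I would observe that in any equation $\Tgt_i \splitsIn \Src_i$ with $\Src_i = \src_1 \src_2 \cdots \src_{c_i}$, a satisfying assignment $\sigma$ induces a factorization of $\Tgt_i$ into $c_i$ consecutive non-empty substrings, the $j$-th being $\sigma(\src_j)$. Such a factorization is specified by the sequence of cut positions, i.e.\ by the right endpoint of $\sigma(\src_j)$ within $\Tgt_i$ for each $j$; since $|\Tgt_i| \le \Tgtlen$, there are at most $\Tgtlen^{c_i-1} \le \Tgtlen^{c_i}$ such factorizations, hence at most $\Tgtlen^{c_i}$ per equation. Crucially, once we fix such a factorization for a single occurrence of a block $\src$ (say, its first occurrence in the first equation in which it appears), the string $\sigma(\src)$ is determined; we then do not need to guess endpoints for the other occurrences of $\src$, but only to verify consistency. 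So it suffices to guess, for each of the $k$ blocks, one pair of endpoints (start and end) of one of its occurrences: at most $\Tgtlen^2$ choices per block, $\Tgtlen^{2k}$ in total.

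Next I would spell out the verification step for a fixed guess. Each block $\src$ now has a candidate value $\sigma(\src)$, a substring of some target string, read off in $\Oh(\Tgtlen)$ time. For every equation $\Tgt_i \splitsIn \Src_i$ we compute $\sigma(\Src_i)$ by concatenation and check $\sigma(\Src_i) = \Tgt_i$; this takes $\Oh(\Tgtlen)$ time per equation (we may abort as soon as the length exceeds $\Tgtlen$ or a mismatch occurs), so $\Oh(r\Tgtlen) = \Oh^*(1)$ time overall, polynomial in the input size. We also check each $\sigma(\src)$ is non-empty, which is automatic since we guess a start and an end with start $\le$ end. We output \emph{yes} iff some guess passes all checks. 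Correctness is immediate: if the system is satisfiable by some $\sigma$, then guessing, for each block, the endpoints of one of its occurrences under $\sigma$ reproduces $\sigma$ exactly and passes verification; conversely any guess that passes verification exhibits a satisfying assignment.

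There is no real obstacle here; the only point requiring a little care is the bookkeeping that a block's value need be guessed only once even though it may occur many times and in many equations, which is what keeps the exponent at $2k$ rather than something like $2\sum_i c_i$. Multiplying the $\Tgtlen^{2k}$ guesses by the polynomial verification cost gives the claimed $\Oh^*(\Tgtlen^{2k})$ bound.
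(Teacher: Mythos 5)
Your proposal is correct and follows essentially the same approach as the paper: for each of the $k$ blocks, guess a start and end position of one of its occurrences in a target string ($\Oh(\Tgtlen^{2})$ choices per block, $\Oh(\Tgtlen^{2k})$ overall), then verify the induced assignment against all equations in polynomial time. The additional bookkeeping you spell out (guessing each block's value only once rather than per occurrence) is exactly the point the paper's proof relies on as well.
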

\begin{proof}
For each block $X\in\Blocks$, pick an equation $\Tgt\splitsIn \Src$ where $X$ appears in $\Src$, and branch into all possibilities to choose integers $1\leq i \leq j\leq |\Tgt|\le  \Tgtlen$ such that $\sigma(X):=\Tgt[i]\ldots \Tgt[j]$. Overall, this creates $\Oh(\Tgtlen^{2k})$ branches. For each branch, it remains to check whether the resulting assignment satisfies $\Ecal$, which can be done in linear time. The total running time is thus $\Oh^*(\Tgtlen^{2k})$. 
\end{proof}
The algorithm above can be extended to the case where we allow for deletions in the target strings.
\begin{proposition} \label{prop:del+k-XP}
{\sc String Equations with Deletions} is in \XP for parameter $k+d$.
\end{proposition}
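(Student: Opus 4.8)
The plan is to extend the brute-force approach of \Cref{prop:XP} by additionally guessing, for each equation, how many deletions are spent on it and where the surviving letters of each block begin and end \emph{inside the original target string}. The key observation is that once we fix the set of deleted positions in each $\Tgt_i$, the resulting strings $\Tgt_i'$ are determined, and then the problem reduces to ordinary {\sc String Equations}, which we already know how to solve in $\Oh^*(\Tgtlen^{2k})$ time by \Cref{prop:XP}. So it suffices to bound the number of relevant ways to distribute and place the deletions.

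First I would observe that we may assume each equation uses at least one block (otherwise it is trivially unsatisfiable unless $\Tgt_i'$ is empty, which we can detect directly), so $r\le k$ and hence $r$ is bounded in terms of the parameter; this lets us afford a factor depending on $r$. Next, for the distribution of deletions among the $r$ equations, there are at most $\binom{d+r-1}{r-1}=\Oh(d^{r})$ ways to write $d = d_1+\dots+d_r$ with $d_i\ge 0$, and $d_i\le |\Tgt_i|\le\Tgtlen$. Then, instead of guessing the exact deleted positions in $\Tgt_i$ (which could be $\binom{\Tgtlen}{d_i}$ many, too expensive), I would directly guess, for each block occurrence, the pair of indices $1\le a\le b\le |\Tgt_i|$ in the \emph{original} target such that $\sigma$ of that block equals the string obtained from $\Tgt_i[a]\ldots\Tgt_i[b]$ after deleting some letters inside this range; this costs $\Oh(\Tgtlen^{2})$ per block, i.e. $\Oh(\Tgtlen^{2k})$ overall, exactly as in \Cref{prop:XP}. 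The consecutive block occurrences in one equation must use index ranges that tile $\Tgt_i$ up to the gaps that will be deleted, so after guessing all the ranges the deleted positions in $\Tgt_i$ are partitioned into the ``between-block'' gaps and the ``inside-block'' portions.

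The remaining subtlety is that a block $X$ may occur in several equations (and several times), and all its occurrences must be assigned the \emph{same} string $\sigma(X)$ even though different occurrences may have different amounts of deletion inside them. To handle this, after fixing all index ranges and the per-equation deletion budgets, I would not try to commit to $\sigma(X)$ directly; instead I would set up, for each equation, the requirement that $\Tgt_i'$ factors as the concatenation of the chosen block substrings with some letters deleted, and check consistency across equations. Concretely: for each block $X$, let its occurrences have guessed original-target substrings $s_1,\dots,s_p$; we need a common string $w$ that is a subsequence of every $s_\ell$, with the total number of deleted letters (summed over all occurrences of all blocks, plus the between-block gaps) at most $d$. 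This is where a clean argument is needed, and it is the main obstacle: naively this is a ``common subsequence with budget'' feasibility test across many strings, which is itself hard in general. I would resolve it by noting that once \emph{every} index range and \emph{every} per-occurrence inside-deletion count $\delta_{X,\ell}$ is guessed (there are $\le d$ choices for each of the $\le rc = \Oh(k^2)$ occurrence counts, hence a factor $d^{\Oh(k^2)}$, absorbed into the $k+d$ parameter), the length of $\sigma(X)$ is pinned down to $|s_1|-\delta_{X,1}=\cdots=|s_p|-\delta_{X,p}$, and then we can further brute-force which $\delta_{X,\ell}$ letters to delete from $s_1$ to obtain $\sigma(X)$ — but that reintroduces a binomial blow-up. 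The cleaner route, which I would take, is: guess $\sigma(X)$ for each block directly as a substring of \emph{some fixed} occurrence's range with exactly $\delta$ deletions, but observe $|\sigma(X)|\le\Tgtlen$ so actually we may as well guess the endpoints of $\sigma(X)$ as a substring of a target \emph{with no deletions} is wrong in general — so instead, since $|\sigma(X)| \le \Tgtlen$ and $\sigma(X)$ ranges over substrings-with-deletions of a bounded-length string, I fall back on guessing the left and right endpoints $(a,b)$ \emph{and} which of the $\binom{b-a+1}{b-a+1-|\sigma(X)|}$ subsets to keep; bounding $b-a+1\le\Tgtlen$ and $b-a+1-|\sigma(X)|\le\delta_{X,1}\le d$ gives $\Tgtlen^{\Oh(d)}$ per block, hence $\Tgtlen^{\Oh(kd)}$ total. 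Multiplying with the $d^{\Oh(k^2)}$ and $\Oh(d^r)=\Oh(d^k)$ factors and the linear verification time, the whole procedure runs in $\Tgtlen^{g(k,d)}$ time for a computable function $g$, establishing membership in \XP for parameter $k+d$.
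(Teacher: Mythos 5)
Your final algorithm is correct, but you reach it by rejecting the simplest route for the wrong reason, and that rejected route is exactly the paper's proof. You dismiss ``guessing the exact deleted positions in $\Tgt_i$'' because there are $\binom{\Tgtlen}{d_i}$ many choices and call this ``too expensive'' --- but $\binom{\Tgtlen}{d_i}\leq \Tgtlen^{d}$, and a factor of $\Tgtlen^{d}$ is precisely what an \XP algorithm for parameter $k+d$ is allowed to spend. The paper's entire argument is: branch over the at most $\binom{\Tgtlen r}{d}=\Oh((\Tgtlen r)^{d})$ ways to choose the deleted positions across all targets, which fixes $\Tgt_1',\ldots,\Tgt_r'$, and solve each resulting {\sc String Equations} instance via \Cref{prop:XP}, for a total of $\Oh^*((\Tgtlen r)^{d}\,\Tgtlen^{2k})$ time. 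No distribution of budgets among equations, no per-occurrence deletion counts, and no cross-occurrence consistency issue arises, because once the deletions are fixed the problem is literally an instance of the already-solved problem.

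Your alternative --- enumerating each $\sigma(X)$ directly as a window $[a,b]$ of some target minus at most $d$ positions --- does also work, and the middle of your writeup can be discarded: once every $\sigma(X)$ is an explicit string, an equation $\Tgt_i\splitsIn\Src_i$ is satisfiable with deletions iff $\sigma(\Src_i)$ is a subsequence of $\Tgt_i$, and the number of deletions it consumes is exactly $|\Tgt_i|-|\sigma(\Src_i)|$; so the verification is a subsequence test plus a length count, and the $\delta_{X,\ell}$ guesses, the $d^{\Oh(k^2)}$ factor, and the ``common subsequence with budget'' worry all evaporate. That cleaned-up version runs in $\Tgtlen^{\Oh(kd)}$ time, which is \XP in $k+d$ but strictly worse than the paper's $(\Tgtlen r)^{d}\,\Tgtlen^{2k}$; the only thing it buys is that the deletion budget is charged per block rather than globally, which is not needed here. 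The takeaway is to calibrate ``too expensive'' against the parameter you are actually given: $\Tgtlen^{f(d)}$ is free when $d$ is part of the parameter.
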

\begin{proof}
Branch into the at most
$\binom{tr}{d}$ 
ways of building strings $\Tgt_1',\ldots,  \Tgt_r'$ out of 
$\Tgt_1,\ldots , \Tgt_r$. 
Testing each time if the resulting {\sf String Equation} problem has a solution can be done via \Cref{prop:XP}, so the overall running time is $O^*((nr)^dt^{2k})$. 
\end{proof}

\begin{theorem} \label{prop:del-hard-k=1}
{\sc String Equations with Deletions} is \NP-hard for $k=1$. Moreover, {\sc String Equations with Deletions} is \Wtwo-hard for parameter $r$ when $k=c=1$ and \Wtwo-hard for parameter $c$ when $k=r=1$.
\end{theorem}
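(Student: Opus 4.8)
The plan is to reduce from a suitable \Wtwo-hard source problem for the parameterized statements, and to note that this same reduction gives \NP-hardness for $k=1$ without any parameter bound. The natural source is {\sc Hitting Set} (equivalently {\sc Dominating Set}), which is \Wtwo-hard for the solution size~$h$; here $h$ will become the parameter $r$ (number of equations) in the first parameterized claim, and we will use a dual encoding for the second claim where $h$ becomes $c$. So first I would set up an instance with a universe $U=\{u_1,\dots,u_n\}$ and sets $F_1,\dots,F_q\subseteq U$, asking for $h$ elements hitting every $F_j$.

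For the claim with $k=c=1$ and parameter $r$: use a single block variable $X$ and $r=h$ equations, each of the form $\Tgt_\ell \splitsIn X$, so $\sigma(X)$ must be obtainable from every $\Tgt_\ell$ by deletions. The idea is to let the "surviving" string $\sigma(X)$ encode a choice of $h$ hitting-set elements, and to design the target strings $\Tgt_\ell$ so that (i) each $\Tgt_\ell$ forces the deletions to reveal exactly one element index, with the $h$ indices read off across the $h$ equations, and (ii) a global gadget appended to every $\Tgt_\ell$ certifies that the chosen indices hit every set $F_j$. Concretely I would build $\sigma(X)$ to be a concatenation $\#w_{i_1}\#w_{i_2}\#\dots\#w_{i_h}\#$ of blocks $w_i$ coding element $u_i$, and make each $\Tgt_\ell$ a "menu" string from which exactly one $\#w_i\#$ segment in the $\ell$-th slot can be carved out by deletions while the remaining slots are forced; the membership-certification part is handled by making the $w_i$ spell, for each set $F_j$ it belongs to, a marker that must appear in every $\Tgt_\ell$, so that after deletions all $q$ set-markers are present, which is possible only if the union of chosen elements is a hitting set. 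The deletion budget $d$ is set to exactly the total number of "menu" characters that must be erased in a valid selection.

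For the claim with $k=r=1$ and parameter $c$: use a single equation $\Tgt \splitsIn X_1 X_2 \cdots X_c$ with $c=h$ block variables, where $\sigma(X_\ell)$ is forced (by flanking unique delimiter characters that cannot be deleted within budget) to be a substring of $\Tgt$ coding the $\ell$-th chosen element, and the hitting condition is again enforced by markers that must survive in $\Tgt$. The role of $r$ versus $c$ is simply whether the $h$ choices are spread over $h$ equations on one variable or over $h$ variables in one equation; the gadget constructions mirror each other. Finally, \NP-hardness for $k=1$ with $d$ part of the input follows because the $k=c=1$ reduction above is a polynomial-time many-one reduction from {\sc Hitting Set} (which is \NP-hard even without a parameter bound), and it uses only $k=1$ block variable.

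The main obstacle I anticipate is ensuring that deletions cannot "cheat": with a single block variable (or a single equation), the adversary controls which letters to delete in each target, and we must prevent deletions from fabricating a spurious common string that is not a legitimate hitting-set encoding. The key design ingredients to prevent this are (a) padding each target with enough copies of every "wrong" alternative so that the deletion budget is tight — any deviation from the intended carving forces more deletions than $d$ allows — and (b) using a large fresh alphabet of delimiters so that structural markers (the $\#$ separators and the per-set markers) are pinned in place and cannot be deleted. Verifying that the budget $d$ can be chosen so that the "only if" direction holds — i.e.\ that every sub-budget deletion scheme yielding a common solution corresponds to a genuine hitting set — is the heart of the argument and will require a careful counting lemma.
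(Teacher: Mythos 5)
There are two concrete problems with your plan. First, your construction for the case $k=r=1$ uses a single equation $\Tgt \splitsIn X_1 X_2 \cdots X_c$ with $c=h$ \emph{distinct} block variables; this gives $k=c=h$, not $k=1$, so it does not prove the claimed statement. With $k=r=1$ the single equation must have the form $\Tgt \splitsIn XX\cdots X$ ($c$ occurrences of the \emph{same} block), and your gadget of ``$h$ independent slots, one per variable'' does not survive this identification: all $c$ occurrences are forced to receive the same string $\sigma(X)$. Second, the step you yourself identify as ``the heart of the argument'' --- the counting lemma guaranteeing that no sub-budget deletion scheme can cheat --- is exactly the part that is missing, and the mechanisms you propose do not obviously supply it: in this problem the \emph{only} constraint on deletions is the global budget $d$, so a ``large fresh alphabet of delimiters'' does not by itself pin any character in place; everything must be carried by a tight budget calculation over all $r$ targets simultaneously, which is delicate precisely because a saving in one target can pay for a cheat in another.

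The paper's proof avoids all of this by observing that the problem with one block $X$ and $r$ equations $\Tgt_1 \splitsIn X, \ldots, \Tgt_r \splitsIn X$ \emph{is} {\sc Longest Common Subsequence}: $\sigma(X)$ must be a common subsequence of all targets, and minimizing the deletions $d=\sum_i |\Tgt_i| - r|\sigma(X)|$ is the same as maximizing $|\sigma(X)|$. \Wtwo-hardness for parameter $r$ (and \NP-hardness) then follows directly from the known hardness of LCS in the number of strings, with no gadgets at all. For $k=r=1$ the paper folds the $r$ targets into one equation $P\Tgt_1 P\Tgt_2\cdots P\Tgt_r \splitsIn XX\cdots X$ with a long separator $P=\$^d$, so that the $c=r$ copies of the one block $X$ play the roles of the $r$ equations. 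I recommend you replace your Hitting Set construction with this LCS embedding: your route essentially re-proves the \Wtwo-hardness of LCS from scratch, which is a substantial known result, and in its current form it both leaves the critical lemma open and gets the parameter accounting wrong in the second case.
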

\begin{proof}
We note that {\sc String Equations with Deletions} contains {\sc Longest Common Subsequence} as a sub-problem, using a single block $X$ and equations $\Tgt_1\splitsIn X,\ldots, \Tgt_r\splitsIn X$ to denote the fact that $\sigma(X)$ should be a common subsequence of each $\Tgt_i$. 
In this setting, minimizing the number of deletions $d= \sum_{i=1}^r |\Tgt_i| -r|\sigma(X)|$ is equivalent to maximizing $|\sigma(X)|$, i.e. the length of the common subsequence. Since LCS is \Wtwo-hard for the parameter number of strings $r$~\cite{BDFW95},
the same result applies to  
{\sc String Equations with Deletions} for $k=c=1$.

For the case with a single equation ($r=1$), it suffices to insert a sufficiently long prefix $P=\$^d$ before each string, then again $P\Tgt_1P\Tgt_2\ldots P\Tgt_r \splitsIn XX\ldots X$ has a solution reaching the target number of deletions $d$ iff $\Tgt_1,\ldots, \Tgt_r$ have an LCS reaching the corresponding target size $\frac 1r(\sum_{i=1}^r |\Tgt_i| - d)$.
\end{proof}

\begin{proposition} \label{prop:del-rc-XP}
{\sc String Equations with Deletions} is in \XP for parameter $r+c$.
\end{proposition}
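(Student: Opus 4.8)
The difficulty compared to \Cref{prop:del+k-XP} is that $d$ is now unbounded, so we cannot branch over the $\binom{tr}{d}$ deletion patterns. The plan is to branch instead only over the \emph{coarse shape} of a solution — where inside each original target the blocks of that equation sit — and then optimize the actual content of each block independently. First I would record the structural observation: the total number of block occurrences is at most $rc$, hence in particular $k\le rc$, a function of the parameter. Now fix any solution $(\Tgt_1',\ldots,\Tgt_r',\sigma)$ with at most $d$ deletions, and consider an equation $\Tgt_i\splitsIn X_{i,1}X_{i,2}\cdots X_{i,c_i}$ (with $c_i\le c$). Since $\Tgt_i'=\sigma(X_{i,1})\cdots\sigma(X_{i,c_i})$ is a subsequence of $\Tgt_i$, fix an embedding and let $q_{i,j}$ be the position of $\Tgt_i$ onto which the last letter of $\sigma(X_{i,j})$ is mapped, with $q_{i,0}:=0$ and $q_{i,c_i}:=|\Tgt_i|$. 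Then $0=q_{i,0}\le q_{i,1}\le\cdots\le q_{i,c_i}=|\Tgt_i|$ and, by monotonicity of the embedding, $\sigma(X_{i,j})$ is a subsequence of the interval $I_{i,j}:=\Tgt_i[q_{i,j-1}+1\,..\,q_{i,j}]$.

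The algorithm branches over all choices of such cut points for all equations: there are $\Oh(t^{c-1})$ choices per equation and thus $\Oh(t^{r(c-1)})$ branches in total. Inside one branch the blocks decouple completely: the only constraint on $\sigma(X)$ is that it be a non-empty common subsequence of the multiset $\mathcal I_X:=\{I_{i,j}\mid X_{i,j}=X\}$, and the number of deletions equals $\sum_{i}|\Tgt_i|-\sum_{X\in\Blocks}|\mathcal I_X|\cdot|\sigma(X)|$, which depends on each $\sigma(X)$ only through its length. Hence the best assignment realizable with these cut points takes $\sigma(X)$ to be a \emph{longest} common subsequence of $\mathcal I_X$; the branch is accepted iff every such LCS is non-empty and the resulting deletion count is at most $d$. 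Since $|\mathcal I_X|$ is bounded by the number of occurrences of $X$, hence by $rc$, each LCS is computed by the standard multi-string dynamic program in $t^{\Oh(rc)}$ time; summing over the (at most $rc$) blocks keeps this within $|x|^{g(r+c)}$, and so does the branching factor $t^{r(c-1)}$.

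Correctness is immediate from the structural discussion in both directions. For soundness, concatenating the chosen LCS values along each equation yields strings $\Tgt_i'$ that are subsequences of $\Tgt_i$ (the chosen intervals are consecutive and cover $\Tgt_i$), realize the prescribed block decomposition, are non-empty (each LCS is non-empty), and use at most $d$ deletions. For completeness, the cut points extracted from a genuine solution give, for every block, an LCS of length at least $|\sigma(X)|$, so the deletion count the algorithm computes for those cut points is no larger than that of the solution, i.e. at most $d$. I do not expect a real obstacle here; the only point requiring care is the running-time bookkeeping — verifying that both the number of branches and the per-block LCS computations are of the form $|x|^{g(r+c)}$ via $k\le rc$ — together with the observation that an interval may be empty, in which case the corresponding block's common subsequence is forced to be empty and the branch is (correctly) discarded, which is exactly what enforces the non-erasing requirement.
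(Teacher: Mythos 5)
Your proposal is correct and follows essentially the same approach as the paper's proof: branch over the $\Oh(\Tgtlen^{rc})$ positions delimiting each block occurrence within its target, observe that each block is then constrained only to be a common subsequence of at most $rc$ intervals, and replace its value by a longest common subsequence of those intervals, which can only decrease the number of deletions. The only cosmetic difference is that you mark the endpoint of each occurrence where the paper marks its starting point; your additional bookkeeping of the deletion count and the non-emptiness check are fine.
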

\begin{proof}
We introduce the following notation: given an equation $\Tgt\splitsIn \Src$ satisfied by an assignment $\sigma$, the \emph{starting point} for block $i$ is the index $j$, $1\leq j\leq |\Tgt|$, such that the first character of $\sigma(\Src[i])$ is mapped to $\Tgt[j]$. Thus, if $j$ is the starting point of block $i$ and $j'$ is the starting point of block $i+1$ (or $j'=|\Tgt|+1$ if $i=|\Src|$), then $\sigma(\Src[i])$ is a subsequence of $\Tgt[j]\ldots \Tgt[j'-1]$. 

Consider a solution $\sigma$ for an instance $(\Ecal, d)$ of {\sc String Equations with Deletions}. Compute the starting points of each block in each equation: by the above remark, for each $X\in\Blocks$ there exist substrings of $\Tgt_i$s denoted $F_1, \ldots, F_h$ such that $\sigma(X)$ is a common subsequence of strings $F_1, \ldots, F_h$ (with $h\leq rc$ corresponding to the number of occurrences of block $X$). Note that replacing $\sigma(X)$ by any longest common subsequence of these strings still yield a valid solution for the same problem, since the number of deletions performed in target strings does not increase.

This leads to the following \XP algorithm: branch into all possibilities to choose a starting point of every block of $\Ecal$; the number of branches is $\Oh(\Tgtlen^{rc})$. For each block $X$, compute strings $F_1,\ldots, F_h$ as described above. Set $\sigma(X)$ to be a longest common subsequence of these strings in time $O^*(\Tgtlen^h)$:  it remains to check whether $\sigma$ is a solution to the original problem, which can be done in linear time. 
\end{proof}

\section{Constant number of equations}\label{sec:constant_r}
For the sake of completeness (and as a warm-up for following, more complex reductions), we give a \Wone-hardness proof for the single-equation case (see also Fernau et al. \cite{FSV16}).
\begin{theorem} \label{prop:W-hard-1eq}
{\sc String Equations}  with one equation (that is, $r=1$) is W[1]-hard for~$c$.
\end{theorem}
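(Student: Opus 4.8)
The plan is to reduce from \textsc{Multicolored Clique}, using a single string equation whose block variables encode the choice of one vertex per color class. Given a $\kappa$-partite graph $G = (V = \bigcup_{i=1}^\kappa V_i, E)$, I will introduce $\kappa$ block variables $A_1, \ldots, A_\kappa$ (one per color class), so that $c = \Oh(\kappa)$ and a \Wone-hardness for $c$ follows. The intended meaning is that $\sigma(A_i)$ should encode the vertex selected from $V_i$; since blocks cannot be empty, I will pad each vertex code to a fixed common length, so that each admissible value of $\sigma(A_i)$ is a length-$\ell$ gadget string uniquely identifying a vertex of $V_i$.

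The equation itself will be $\Tgt \splitsIn \Src$ where $\Src$ is a carefully chosen word over $\{A_1, \ldots, A_\kappa\}$ (with repetitions, to force each vertex choice to be consistent everywhere it is tested) possibly interleaved with more blocks and a single long target string $\Tgt$ over a suitable alphabet. The key trick is to use \emph{separator characters} in $\Tgt$ that do not occur inside any vertex code, placed so that between two consecutive separators there is room for exactly one block of length $\ell$; this constrains each $\sigma(A_i)$ both in length and in the set of strings it may equal (it must match one of the $|V_i|$ vertex-code windows that appear in $\Tgt$ at the positions where $A_i$ is used). Then, for each pair of color classes $\{i,j\}$, I will include in $\Src$ a sub-pattern like $A_i A_j$ (or $A_i\, Y_{ij}\, A_j$ with a fresh joker $Y_{ij}$) aligned against a region of $\Tgt$ that is the concatenation of edge-codes $\mathrm{code}(u)\mathrm{code}(w)$ over all edges $uw \in E$ with $u \in V_i, w \in V_j$; satisfiability of that region forces $(\sigma(A_i),\sigma(A_j))$ to spell an actual edge of $G$. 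Doing this for all $\binom{\kappa}{2}$ pairs forces the $\kappa$ chosen vertices to be pairwise adjacent, i.e. to form a multicolored clique, and conversely any such clique yields an assignment. Correctness then amounts to checking both directions: a clique gives a satisfying $\sigma$ by the construction, and any satisfying $\sigma$ picks well-defined vertices (by the separator/length constraints) that are pairwise adjacent (by the edge-list regions).

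The main obstacle I anticipate is the \emph{global consistency} of a single equation: because there is only one target string read left to right, every occurrence of $A_i$ must be forced to take the same value, and the ``windows'' of $\Tgt$ that align with the various occurrences of $A_i$ must all list exactly the codes of $V_i$ — while simultaneously the interleaved occurrences of the other $A_j$ and the joker spacers must line up correctly. Designing the layout of $\Tgt$ and the order of blocks in $\Src$ so that the alignment is rigid (no ``sliding'' is possible) is the delicate part; the standard remedy is to use distinct separator symbols per ``slot'' and to make vertex codes a fixed-length, separator-free encoding (e.g. over a binary alphabet with $\ell = \lceil \log_2 n\rceil$, or unary-style blocks of a distinguished letter), so that the only freedom left in a solution is precisely which vertex code fills each slot. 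Once rigidity is established, the pairwise-adjacency gadgets are routine, and since $c = \Oh(\kappa)$ and the construction is polynomial, \Wone-hardness for $c$ follows from that of \textsc{Multicolored Clique}. (As the excerpt notes, this is essentially the construction of Angluin and of Fernau et al.\ \cite{angluin1979finding, FSV16}, restated here as a warm-up for the later, more involved reductions.)
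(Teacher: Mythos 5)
Your outline follows the same route as the paper's proof: reduce from a clique problem to a single equation, with blocks encoding chosen vertices, a target string listing the edges of $G$ separated by fresh separator characters, and sub-patterns $A_iA_j$ whose images must land inside an edge window. The problem is that you defer exactly the step that carries the whole argument. You assert that distinct per-slot separators plus fixed-length, separator-free vertex codes make the alignment rigid, ``so that the only freedom left is which vertex code fills each slot'' --- but with codes of length $\ell>1$ this is not yet true: nothing in what you describe prevents $\sigma(A_i)$ from being a proper substring of a code, nor prevents the window $\mathrm{code}(u)\,\mathrm{code}(w)$ from being split by $A_iA_j$ at a position other than the midpoint (e.g.\ $|\sigma(A_i)|=1$ and $|\sigma(A_j)|=2\ell-1$), which destroys the vertex identification. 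Your parenthetical alternative $A_i\,Y_{ij}\,A_j$ with a fresh joker $Y_{ij}$ is worse: since the joker can absorb separators, $A_i$ and $A_j$ need not lie in the same edge window at all, and the adjacency test evaporates.

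The paper closes this gap by taking the degenerate instantiation $\ell=1$: the alphabet contains $V$ itself, the target is $y_0\,e_1\,y_1\,e_2\,y_2\cdots e_m\,y_m$ with each separator $y_j$ occurring exactly once, and $\Src=\joker\prod_{i<j}(X_iX_j\,\joker)$. Rigidity then follows from two observations: a block repeated $\kappa-1\ge 2$ times cannot contain a character occurring only once in $\Tgt$, so each $\sigma(X_i)$ is separator-free; and every maximal separator-free substring of $\Tgt$ has length $2$, so $|\sigma(X_i)|+|\sigma(X_j)|\le 2$ forces $|\sigma(X_i)|=1$, i.e.\ $\sigma(X_i)$ is a vertex and $\sigma(X_iX_j)$ is an edge. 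Note also that the paper has no separate vertex-selection region --- the edge list does both jobs --- and that the forward direction relies on the edges being sorted lexicographically so that the selected edges occur in $\Tgt$ in the same order as the pairs $(i,j)$ occur in $\Src$, a detail your sketch should also address. If you insist on longer codes (e.g.\ binary of length $\lceil\log_2 n\rceil$), you must add machinery forcing block boundaries onto separator positions; that is precisely the ``delicate part'' you name but do not carry out.
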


\begin{proof}

\begin{figure}
\begin{align*}
\Tgt&= y_0\ \ur a\,\ug c\ y_1\ \ur a\,\ub d\ y_2\  b\, c\ y_3\ b\,d\ y_4\ \ug c\,\ub d\ y_5 
\\
\Tgt&\splitsIn \joker\ \ur R\,\ug G \ \joker\  \ur R\,\ub B \ \joker\  \ug G\,\ub B \ \joker
\end{align*}
\caption{\label{fig:W-hard-1eq} Reduction of \Cref{prop:W-hard-1eq} (using $r=1$ equation) applied to the graph from \Cref{fig:graph} (using a single string equation, with the target string described first, then its block decomposition). In this and following examples, in order to lighten notations, we use color initials $R,G,B$ or $r,g,b$ instead of indices $1,2, 3$ to highlight the fact that we select a vertex of the corresponding color in the graph (or an edge between corresponding colors, i.e. $R,G,B$ designate $X_1,X_2,X_3$ respectively, $R_g,R_b$ designate  $X_{1,2}, X_{1,3}$, etc). A satisfying assignment is given by colored underlines: $\sigma(R)=a$, $\sigma(G)=c$, $\sigma(B)=d$.}
\end{figure}

See \Cref{fig:W-hard-1eq}. Consider an instance $(G=(V,E), \kappa)$ of {\sc Clique}.  Introduce $m+1$ separators (i.e., new characters) denoted $y_0,\ldots, y_{m}$, let $\Sigma=V\cup\{y_0,\ldots,y_m\}$. Introduce $\kappa$ non-joker blocks $\{X_1,\ldots,X_{\kappa}\}$ and ${\kappa \choose 2}+1$ joker blocks (for a total of $k=\Oh( \kappa^2)$ blocks). Let 
$$\Ecal:=\{\Tgt\splitsIn \Src\} \text{ where } \Tgt:=y_0\prod_{i=1}^m (e_j y_j) \text{ and } \Src:= \joker\prod_{1\leq i < \kappa}\prod_{i< j \leq \kappa} (X_iX_j \joker)$$
This concludes the construction of the instance. We now show the correctness of the reduction. 

($\Rightarrow$) Assume that $G$ has a $\kappa$-clique $\{u_1,\ldots, u_{\kappa}\}$. Set $X_i:=u_i$. Then each $X_iX_j$ with $1\leq i<j\leq \kappa$ corresponds to an edge in $G$, hence it appears as a substring in $\Tgt$, in lexicographical order (recall that we assume that the edges of~$G$ are ordered). Finally, joker blocks can be matched to the gaps around selected edges, which are non-empty (they contain at least one separator), so this assignment satisfies $\Ecal$.

($\Leftarrow$) Assume that $\Ecal$ has a satisfying assignment $\sigma$. First, note that $\sigma(X_i)$ may contain only characters from $V$ (since $X_i$ is repeated $\kappa-1$ times and separators only once),  and at most one character (otherwise, $\sigma(X_iX_j)$ contains three characters, so by definition of $\Tgt$ at least one separator). Thus $X_i\in V$. Furthermore, $X_iX_j$ is an edge of $E$ for each $1\leq i<j\leq \kappa$, so $\{X_1,\ldots,X_{\kappa}\}$ is a $\kappa$-clique.
\end{proof}

Only with one equation the problem is already hard, but this definitely needs duplications (since an instance having 1 duplicate-free equation is trivial). Hence we look at instances with a constant number of equations, but no duplicates: we show this case is hard as well, even in the \emph{unique target} case, i.e. when all target strings are equal.

\begin{theorem} \label{prop:W-hard-2eq}
{\sc String Equations} with $r=2$ duplicate-free equations, a unique target and parameter $c$ is W[1]-hard.
\end{theorem}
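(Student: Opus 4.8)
The plan is to reduce from \textsc{Multicolored Clique} with $\kappa$ color classes $V_1,\ldots,V_\kappa$, building two duplicate-free equations sharing the same target string $\Tgt$. The intuition is that with one equation we could force a vertex-selection gadget (as in \Cref{prop:W-hard-1eq}), but verifying all ${\kappa\choose 2}$ edges simultaneously forced duplicated blocks there. With two equations we can instead use two \emph{different linear orderings} of the same ${\kappa\choose 2}$ edge-checking blocks, so that no block is repeated within a single $\Src_i$, yet every pair of vertex-blocks $X_i,X_j$ still appears consecutively (as $X_iX_j$ or $X_jX_i$) in at least one of the two equations. Concretely, I would split the pairs $\{i,j\}$ into two sets and have equation~1 contain the vertex-selection part plus the first set of edge-checkers, and equation~2 contain a \emph{second copy of the vertex-selection part} plus the second set of edge-checkers — then tie the two copies together so that $\sigma(X_i)$ is forced to equal $\sigma(X_i')$ across equations.

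First I would set up the target string. Pick $\Sigma = V \cup \{y_0,\ldots,y_N\}$ for enough separators $y_\ell$, and let $\Tgt$ be one fixed string that serves as the target of both equations. The subtle point, compared to \Cref{prop:W-hard-1eq}, is that $\Tgt$ must simultaneously "look like" a vertex-list gadget followed by an edge-list gadget \emph{and} (read under a different block decomposition) look like another vertex-list gadget followed by the complementary edge-list gadget. I would construct $\Tgt$ as a concatenation of several zones delimited by unique separators: a vertex zone where each color class $V_i$ contributes the list of its vertices (each guarded by separators so a block covering it must be a single vertex), followed by an edge zone listing, for each relevant ordered color pair, all edges between those classes in lexicographic order. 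Reusing the lexicographic-order trick from \Cref{prop:W-hard-1eq}, a block $X_iX_j$ placed in the edge zone for colors $(i,j)$ is forced to spell an actual edge with endpoints of the right colors, and monotonicity of the separator indices across the equation forces consistency of which vertex is chosen. The separators must be chosen so that the \emph{same} string $\Tgt$ admits both intended decompositions; I would do this by making the two equations' block sequences agree on the separator positions and differ only in how the non-separator characters are grouped.

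The key steps in order: (1) define the vertex-selection gadget and show, via the separator/lexicographic argument of \Cref{prop:W-hard-1eq}, that in any satisfying assignment each $X_i$ (and each $X_i'$) is a single vertex of color~$i$; (2) design the two complementary edge-orderings and verify each $\Src_i$ is duplicate-free (each vertex-block appears $O(\kappa)$ times but — crucially — is still \emph{not} the problem; "duplicate-free" forbids a block appearing \emph{twice in the same $\Src_i$}, so I must ensure each of the $X_i$, $X_i'$, and each edge-checker joker appears at most once per equation, which is why the edge checkers are split between the two equations and why I use $X_i$ in equation~1 but a renamed $X_i'$ in equation~2); (3) force $\sigma(X_i)=\sigma(X_i')$ by having both equations contain a shared "anchor" substring of $\Tgt$ whose only possible decomposition identifies the two — e.g. include the pattern $X_1X_1'X_2X_2'\cdots$ against a zone of $\Tgt$ that reads $v\,v$ for each chosen vertex... actually more robustly, reuse the single shared target: since both equations have target $\Tgt$, I place the vertex zone so that equation~1 reads it as $X_1\cdots X_\kappa$ and equation~2 reads the \emph{same characters} as $X_1'\cdots X_\kappa'$, which immediately gives $\sigma(X_i)=\sigma(X_i')$ for free; (4) prove both directions of correctness: a $\kappa$-clique gives an assignment (set each $X_i,X_i'$ to the clique vertex of color $i$, let jokers soak up the separator-padded gaps, all of which are non-empty), and conversely a satisfying assignment yields a multicolored clique because every color pair is edge-checked in one of the two equations; (5) observe $c=\Oh(\kappa^2)$ and $r=2$, completing the parameterized reduction.

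The main obstacle I anticipate is step~(3)/(1) done jointly: making \emph{one} target string $\Tgt$ simultaneously admit both block decompositions with full rigidity. In \Cref{prop:W-hard-1eq} the separators do double duty — they both pad the gaps and prevent blocks from spanning structural boundaries — and with two decompositions over the same $\Tgt$ the separator pattern seen by equation~1 and by equation~2 must be compatible, which constrains where the two edge-lists can be placed and may force extra separators or a more careful interleaving (possibly putting equation~1's edge zone and equation~2's edge zone in disjoint regions of $\Tgt$, each region being "all jokers" for the other equation). I expect this bookkeeping — choosing the zones and separators so that each equation is duplicate-free, every gap is non-empty, and no block can cheat by straddling a boundary — to be the technically delicate part; the clique$\leftrightarrow$assignment correspondence itself should then follow the same pattern as \Cref{prop:W-hard-1eq}.
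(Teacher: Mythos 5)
There is a genuine gap, and it is a counting obstruction at the heart of your step~(2). Your plan uses only $2\kappa$ vertex blocks ($X_1,\dots,X_\kappa$ in equation~1's vertex zone and $X_1',\dots,X_\kappa'$ in equation~2's), and then wants every pair $\{i,j\}$ to be edge-checked by an adjacent occurrence of two blocks valued at the selected vertices of colors $i$ and $j$. But duplicate-freeness gives each block at most one occurrence per equation, and $X_i$ (resp.\ $X_i'$) already spends its occurrence in equation~1's (resp.\ equation~2's) vertex zone. So vertex slot $i$ has at most two remaining occurrences usable for edge checks ($X_i'$ once in equation~1 and $X_i$ once in equation~2), whereas it must participate in $\kappa-1$ of the ${\kappa\choose 2}$ checks. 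Equivalently, each of your two edge-checker sets must be a matching on the color classes, so the two equations together can verify at most $\kappa$ pairs --- already insufficient for $\kappa=4$. Your parenthetical reassurance that "each vertex-block appears $O(\kappa)$ times but is still not the problem" is exactly where the argument breaks: those $O(\kappa)$ occurrences cannot be distributed over only two duplicate-free equations.

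The paper's construction resolves precisely this point with a device your proposal lacks: it introduces one coding block \emph{per ordered pair}, i.e.\ $X_{i,j}$ for all $j\neq i$ (plus a primed family $X'_{i,j}$), for $2\kappa(\kappa-1)$ coding blocks in total. The vertex zone of the target repeats each vertex character $\kappa-1$ times, so that in equation~1 the $\kappa-1$ blocks $X_{i,j}$ ($j\neq i$) are packed against a single run $w_i^{\kappa-1}$ and are thereby all forced to equal the same single vertex $w_i$; each $X_{i,j}$ is then used exactly once more, namely in the edge zone of the \emph{other} equation as part of $X_{i,j}X_{j,i}$, which checks the pair $\{i,j\}$. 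The two equations swap which family ($X$ vs.\ $X'$) plays the vertex role and which plays the edge role, so every block occurs exactly once per equation and all ${\kappa\choose 2}$ pairs are checked. Your overall architecture (single target with vertex and edge zones, two equations reading it under swapped roles, a starting anchor to pin down the zone boundary) matches the paper's, but without the per-ordered-pair blocks and the $w^{\kappa-1}$ equalization gadget the reduction cannot be completed as you describe; also note that this gadget is what makes "$\sigma(X_i)=\sigma(X_i')$" unnecessary --- the paper never needs to identify the two families, it only needs each family internally consistent.
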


\begin{figure}

\begin{align*}
\Tgt &=z\ x_0\ \ur a\,\ur a\ x_1\ b\,b\ x_2\ \ug c\,\ug c\ x_3\ \ub d\,\ub d\ x_4\ z'\ y_0\ \ur a\,\ug c\ y_1\ \ur a\,\ub d\ y_2\ b\,c\ y_3\ b\,d\ y_4\ \ug c\,\ub d\ y_5
\\
\Tgt&\splitsIn Z\ A_0\ \ur{R_g}\,\ur{R_b} \ A_1\  \ug{G_r}\,\ug{G_b} \ A_2\  \ub{B_r}\,\ub{B_g} \ A_3\  Z' \ B_0\ \ur{R_g'}\,\ug{G_r'} \ B_1\  \ur{R_b'}\,\ub{B_r'} \ B_2\  \ug{G_b'}\,\ub{B_g'} \ B_3
\\
\Tgt&\splitsIn Z'\ A_0\ \ur{R_g'}\,\ur{R_b'} \ A_1\  \ug{G_r'}\,\ug{G_b'} \ A_2\  \ub{B_r'}\,\ub{B_g'} \ A_3\  Z \ B_0\ \ur{R_g}\,\ug{G_r} \ B_1\  \ur{R_b}\,\ub{B_r} \ B_2\  \ug{G_b}\,\ub{B_g} \ B_3
\end{align*}
\caption{\label{fig:W-hard-2eq} Reduction of \Cref{prop:W-hard-2eq} (using $r=2$ duplicate-free equations) applied to the graph from \Cref{fig:graph}.
}
\end{figure}
\begin{proof}

See \Cref{fig:W-hard-2eq}. Consider an instance $G=(V,E), \kappa$ of {\sf Clique}.  Introduce $n+m+3$ separators (i.e., new characters) denoted $x_0,\ldots, x_{n}$, $y_0,\ldots, y_{m}$ and $z$. Let $\Sigma$ consist of $V$ and the set of separators. Introduce $2\kappa(\kappa-1)$ \emph{coding} blocks $\{X_{i,j},X_{i,j}'\}$ for $i\neq j$, a \emph{starting} block $Z$ and $\kappa+{\kappa \choose 2}+2$ \emph{gap} blocks $A_0,\ldots, A_{\kappa} , B_0$ and $B_{i,j}$ for $1\leq i<j\leq \kappa$. Define the following string equations (we decompose the strings into \emph{vertex} and \emph{edge} sections for ease of presentation):
\begin{align*} 
\Ecal&:=\{\Tgt\splitsIn \Src, \Tgt\splitsIn \Src'\} \\
\text{ where } \Tgt&:=z\quad x_0\ \prod_{i=1}^n (v_i^{k-1}\ x_i )
     &&z\quad y_0\ \prod_{j=1}^m (e_j\ y_j) \\
\Src&:= Z\ A_0\  \prod_{1\leq i \leq \kappa}\prod_{j\neq i}( X_{i,j}\ A_{i,j} ) 
     &&Z'\ B_0\  \prod_{1\leq i < \kappa}\prod_{i< j \leq \kappa}( X'_{i,j}X'_{j,i}\ B_{i,j} ) \\
\Src'&:=\underbrace{Z'\ A_0\  \prod_{1\leq i \leq \kappa}\prod_{j\neq i}(X'_{i,j}\ A_{i,j} ) }_{\text{vertex section}}
     &&\underbrace{Z\ B_0\   \prod_{1\leq i < \kappa}\prod_{i< j \leq \kappa} (X_{i,j}X_{j,i}\ B_{i,j} )}_{\text{edge section}} 
\end{align*}
This concludes the construction of the instance. 
We now show the correctness of the reduction.

($\Rightarrow$) If $G$ has a $\kappa$-clique $K=\{w_1,\ldots w_{\kappa}\}$. Set $\sigma(Z)=\sigma(Z')=z$ and $\sigma(X_{i,j})=\sigma(X'_{i,j})=w_i$ for all $j\neq i$. Then each $\sigma(\prod_{j\neq i} X_{i,j})$ equals $x_i^{k-1}$, and these substrings appear in the same  order as in $\Src$ in the vertex section of $\Tgt$, and $\sigma(X'_{i,j}X'_{j,i})$, with $i<j$, correspond to edges of $G$ which appear in this order in the edge section of $\Tgt$. Pick $\sigma(A_i)$ to be the gaps around the selected vertices and $\sigma(B_{i,j})$ to be the gaps around selected edges. We match $\Src'$ in exactly the same way with $\Tgt$ (exchanging the roles of $X_i$ and $X_i'$), so that the gaps are identical in both matchings. Thus this assignment satisfies $\Ecal$.

($\Leftarrow$) If $\Ecal$ has a satisfying assignment $\sigma$,
First, note that $\sigma(Z)=z$: The first block in $\Src$ is $Z$, so $Z$ must start with $z$. Moreover,~$z$ is followed by different characters in $\Src$ and $\Src'$ ($x_0$ and $y_0$, respectively). Similarly, $\sigma(Z')=z$. Thus, the vertex (edge) sections of $\Src$ and $\Src'$ are matched only to the vertex (edge) section of $\Tgt$. Hence, $\sigma(X_{i,j})$ is a substring of both  sections 
of~$\Tgt$, which may only be single characters corresponding to some character $w_{i,j}$. In the vertex section, characters $\sigma(X_{i,j})$  for $j\neq i$ appear consecutively in $\Tgt$,
so all vertices $w_{i,j}$ are equal (and denoted $w_i$, $1\leq i\leq \kappa$). In the edge section, each string $\sigma(X_{i,j}X_{j,i})=w_iw_j$ for $1\leq i < j \leq k$ is a substring of $\Tgt$ so it is an edge of $G$, i.e. $\{w_1, \ldots, w_{\kappa}\}$ is a clique in $G$. 
\end{proof}

Building on the result above, we prove that {\sc String Equations} is \Wone-hard even if blocks are allowed to be assigned empty strings (in this case, we need two distinct target strings: having a single target would have trivial solutions by assigning empty strings to all but one blocks).
\begin{theorem} \label{prop:W-hard-2eq-emptyBlocks}
{\sc String Equations} allowing empty blocks with $r=2$ duplicate-free equations and parameter $c$ is \Wone-hard.
\end{theorem}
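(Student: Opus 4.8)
The plan is to adapt the reduction from \Cref{prop:W-hard-2eq} so that empty blocks do not break the argument, while dropping the requirement that the two targets coincide. The difficulty with allowing empty blocks is twofold: first, the coding blocks $X_{i,j}, X'_{i,j}$ could be assigned the empty string, so they no longer reliably encode vertices; second, the gap blocks $A_i, B_{i,j}$ could collapse, destroying the alignment between the vertex/edge sections of $\Src$ and $\Src'$ and $\Tgt$. I would therefore start from the same skeleton — a vertex section encoding a choice of $\kappa$ vertices and an edge section verifying that each selected pair is an edge — but reinforce it with \emph{mandatory singleton anchors}.

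\textbf{Key steps.} First I would keep $\kappa$-clique encoding structure: each vertex $v_i$ is written out (with the right multiplicity so that repeated coding blocks must all take the same single-character value), each edge $e_j=v_sv_t$ appears as a length-$2$ substring in lexicographic order, and separators $x_i$, $y_j$ and a starting character $z$ are inserted between consecutive blocks. Second, to prevent empty coding blocks, I would force each $\sigma(X_{i,j})$ to be flanked \emph{in the target} by two distinct separator characters and in the pattern by nothing else, so that $\sigma(X_{i,j})$ must consume exactly the material strictly between two separators, which by construction is a single vertex character and hence non-empty; the same trick pins $\sigma(Z)=z$ exactly as before. Third, to keep the two equations synchronized without a common target, I would make the separator sequences between the two equations be two different permutations or two different copies (e.g.\ primed versus unprimed separators) so that a block straddling a separator in one equation is impossible: this replaces the "unique target" lever of \Cref{prop:W-hard-2eq} by the observation that $Z$ and $Z'$ are followed by different characters, forcing the vertex section of each pattern to be matched to the vertex section of its own target and likewise for edges, while the coding blocks $X_{i,j}$ appear in the vertex section of one equation and the edge section of the other, tying the two together. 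Fourth, for the $(\Rightarrow)$ direction I would exhibit the obvious assignment from a $\kappa$-clique (coding blocks $=$ chosen vertices, gap blocks $=$ the separator runs, which are non-empty), and for the $(\Leftarrow)$ direction argue, as above, that every coding block is a single vertex character, that all $X_{i,j}$ for fixed $i$ coincide, and that $X_{i,j}X_{j,i}$ being a substring of the edge section forces an edge, yielding a $\kappa$-clique.

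\textbf{Main obstacle.} The hard part will be making the anchoring robust against \emph{arbitrary} empty assignments: because any subset of the gap blocks may vanish, I must ensure that the only way to align the pattern with the target still forces the coding blocks onto the intended single characters. Concretely, I expect to need a careful choice of which separators repeat and which do not — so that even if $\sigma(A_i)=\varepsilon$, the two mandatory distinct separators bracketing each coding block cannot be "reused" to absorb part of a neighbouring block — and to verify that the counting of vertex-character multiplicities ($v_i$ written $k-1$ or similarly many times) still forces equality of the $X_{i,j}$ when some gap blocks are empty. I would handle this by a short case analysis on the position of each separator character in the alignment, showing that the separators of $\Tgt$ must be matched, in order, to the separators occurring in $\Src$ (all of which sit \emph{between} blocks, never inside one, since blocks are over the vertex alphabet only), which rigidly segments the target regardless of which gap blocks are empty.
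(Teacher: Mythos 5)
Your plan correctly identifies the two dangers (empty coding blocks and collapsing gap blocks) and the need for "mandatory singleton anchors", but the mechanism you sketch for enforcing them does not work in this problem model, and the actual forcing device is missing. You propose to show that "the separators of $\Tgt$ must be matched, in order, to the separators occurring in $\Src$ (all of which sit between blocks, never inside one, since blocks are over the vertex alphabet only)". In {\sc String Equations} the right-hand side $\Src$ is a string over $\Blocks$ only --- there are no terminal characters in the pattern --- so every separator of $\Tgt$ must be \emph{absorbed by some block}, and nothing restricts blocks to the vertex alphabet a priori. Consequently the claimed rigid segmentation is exactly what has to be proved, not an observation one can start from: once empty assignments are allowed, a single gap block can swallow an entire stretch of separators together with the vertex characters that your coding blocks were supposed to consume, leaving those coding blocks empty. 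Your "two distinct separators bracketing each coding block" only pins $\sigma(X_{i,j})$ if the neighbouring gap blocks are already known to end and begin precisely at those separators, which is circular.

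The paper resolves this with two concrete devices that your proposal lacks. First, the two targets are made identical except for a single transposition $\phi_1\phi_2$ versus $\phi_2\phi_1$; since neither target contains the other order as a substring, the characters $\phi_1,\phi_2$ must lie in two distinct blocks appearing as $AB$ in $\Src$ and $BA$ in $\Src'$, which uniquely identifies $\Phi_1,\Phi_2$ and pins $\sigma(\Phi_i)=\phi_i$. (Your idea of "two different permutations of separators" is in the right spirit, but you deploy it to separate the vertex and edge sections --- a job already done by the $Z/Z'$ mechanism of \Cref{prop:W-hard-2eq} --- rather than to pin specific anchor blocks.) Second, a prefix $\gamma^{2\kappa+1}$ is covered by exactly $2\kappa+1$ blocks $\Gamma_0,\Gamma_{h,i}$, each of which is constrained by its other occurrence to contain at most one $\gamma$; the counting argument (a run of $N$ identical characters split among exactly $N$ blocks, each of size at most one) forces every one of them to equal $\gamma$, hence to be non-empty. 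These now-rigid $\Gamma'$ blocks are the ones interleaved around the coding runs $\prod_{j\neq i}X_{i,j}$ in the vertex section, which is what finally forces each such run to be a non-empty maximal run $w_i^{\kappa-1}$ between two $\gamma$'s. Without an analogue of this pinning-plus-counting step, your reduction does not rule out the degenerate alignments, so the $(\Leftarrow)$ direction is not established.
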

\begin{proof}
See \Cref{fig:W-hard-2eq-emptyBlocks}.
We build on the proof of \Cref{prop:W-hard-2eq}, and create a similar instance of {\sc String Equations} with more separating gadgets, mainly to enforce that strings $\sigma(Z)$ and $\sigma(X_{i,j})$ may not be empty. We start from $\Tgt,\Src,\Src'$ as defined in the previous proof. We introduce characters $\gamma$, $\phi_1$, and $\phi_2$, as well as blocks $\Gamma_0,\Gamma_0',\Gamma_{h,i}, \Gamma'_{h,i}$ (with $h\in\{0,1\}$ and $1\leq i\leq \kappa$), $\Phi_1$ and $\Phi_2$.

We build the new equation system $\Ecal$ as follows (differences to the construction in the proof of \Cref{prop:W-hard-2eq} are highlighted):
\begin{align*} 
\Ecal&:=\{\Tgt\splitsIn \Src, \bm{ \Tgt'}\splitsIn \Src'\}, \text{ where: }  \\
\Tgt&:=
        \bm{\gamma^{2\kappa +1}  \ 
            \phi_1\phi_2}
        \quad z\ \bm{\gamma}\  x_0\ 
        \prod_{i=1}^n (\bm{\gamma}\ v_i^{k-1}\ \bm{\gamma} \ x_i )\quad       
        z\quad y_0\ \prod_{j=1}^m (e_j\ y_j)  \\    
\Tgt'&:=
        \bm{\gamma^{2\kappa +1}  \ 
            \phi_2\phi_1 }
        \quad z\ \bm{\gamma}\  x_0\ 
        \prod_{i=1}^n (\bm{\gamma}\ v_i^{k-1}\ \bm{\gamma} \ x_i )\quad        
        z\quad y_0\ \prod_{j=1}^m (e_j\ y_j)  \\
\Src&:=  \bm{\Gamma_0\prod_{i=1}^{\kappa} \Gamma_{0,i} \Gamma_{1,i} 
          \Phi_1\Phi_2 } 
\quad 
     Z\ \bm{\Gamma_0'}\ A_0\ 
     \prod_{i=1}^{\kappa} \big( \bm{\Gamma_{0,i}'}\   \prod_{j\neq i} (X_{i,j}  \bm{\Gamma_{1,i}'}  \ A_{i,j}) \big) \\
     &&
     \mathllap{Z'\ B_0\ \prod_{1\leq i < \kappa}\prod_{i< j \leq \kappa} (  X'_{i,j}X'_{j,i}\ B_{i,j} )} \\
\Src'&:=
\underbrace{\bm{\Gamma'_0\prod_{i=1}^{\kappa} \Gamma'_{0,i} \Gamma'_{1,i} 
          \Phi_2\Phi_1 }}_{\text{prefix section}} 
\quad 
     Z'\ \bm{\Gamma_0}\ A_0\  \prod_{i=1}^{\kappa} \big( \bm{\Gamma_{0,i}}  \prod_{j\neq i}(X'_{i,j}\  \bm{\Gamma_{1,i}}\ A_{i,j}) \big) 
     \\
     &&\mathllap{Z\ B_0\  \prod_{1\leq i < \kappa}\prod_{i< j \leq \kappa} ( X_{i,j}X_{j,i}\ B_{i,j} )} \\ 
\end{align*}

\begin{figure}
\begin{align*}
\Tgt &=\gamma\gamma\gamma\gamma\gamma\gamma\gamma\phi_1\phi_2 z \gamma\ x_0\ \gamma \ur a\ur a\gamma \ x_1\ \gamma bb\gamma \ x_2\ \gamma \ug c\ug c\gamma \ x_3\ \gamma \ub d\ub d\gamma \ x_4\ldots
\\
\Tgt' &=\gamma\gamma\gamma\gamma\gamma\gamma\gamma\phi_2\phi_1 z \gamma\ x_0\ \gamma \ur a\ur a\gamma \ x_1\ \gamma bb\gamma \ x_2\ \gamma \ug c\ug c\gamma \ x_3\ \gamma \ub d\ub d\gamma \ x_4\ldots
\\
\Tgt&\splitsIn \Gamma_{0}\Gamma_{0,1}\ldots\Gamma_{1,3}\Phi_1\Phi_2\ Z\ \Gamma'_{0} A_0\ \Gamma'_{0,1} \ur{R_g}\ur{R_b}\Gamma'_{1,1} \ A_1\  \Gamma'_{0,2}\ug{G_r}\ug{G_b}\Gamma'_{1,2} \ A_2\ \Gamma'_{0,3} \ub{B_r}\ub{B_g}\Gamma'_{1,3} \ A_3\ldots
\\
\Tgt&\splitsIn \Gamma'_{0}\Gamma'_{0,1}\ldots\Gamma'_{1,3}\Phi_2\Phi_1\ Z'\ \Gamma_{0} A_0\ \Gamma_{0,1} \ur{R_g'}\ur{R_b'}\Gamma_{1,1} \ A_1\  \Gamma_{0,2}\ug{G_r'}\ug{G_b'}\Gamma_{1,2} \ A_2\  \Gamma_{0,3}\ub{B_r'}\ub{B_g'}\Gamma_{1,3} \ A_3\ldots
\end{align*}
\caption{\label{fig:W-hard-2eq-emptyBlocks} Prefix and vertex sections in the reduction of \Cref{prop:W-hard-2eq-emptyBlocks} (with $r=2$ equations and empty blocks) applied to the graph from \Cref{fig:graph}}
\end{figure}
This concludes the construction of the instance. We now show the correctness of the reduction.

($\Rightarrow$) Assume that $G$ has a $\kappa$-clique. We set $\gamma=\sigma(\Gamma_0)=\sigma(\Gamma_0')=\sigma(\Gamma_{h,i})=\sigma(\Gamma_{h,i}')$ for $h\in\{0,1\}$ and $1\leq i\leq \kappa$. We set $\phi_i=\sigma(\Phi_i)$, $i\in \{1,2\}$, and keep the assignments to other blocks from \Cref{prop:W-hard-2eq} (inserting character $\gamma$ within separators $G_{i,j}$ and $G'_{i,j}$ since only those around blocks $X_{i,j}$ are covered by $\Gamma_{h,i}$ and $\Gamma'_{h,i}$). The new prefix section is  correctly covered by these new blocks, as are the occurrences of~$\gamma$ in the vertex sections (those around selected vertices are covered by  $\Gamma_{h,i}$ and $\Gamma'_{h,i}$, others are inserted in separators $G_{i,j}$ and $G'_{i,j}$).

($\Leftarrow$) If $\Ecal$ has an assignment, possibly with empty blocks, we focus on proving that blocks $Z$ and $X_{i,j}$ are not empty. First consider characters $\phi_i$: they must each be in their own block ($\Tgt$ and $\Tgt'$ have no other common substring containing these). Write $A$ and $B$ respectively for the blocks assigned $\phi_1$ and $\phi_2$, then $AB$ is a substring of $\Src$ and $BA$ is a substring of $\Src'$. It can be verified that $\Phi_1$, $\Phi_2$ are the only blocks satisfying this property, so $\sigma(\Phi_i)=\phi_i$ for $i=1,2$. Thus, from the prefix section of $\Tgt$ and $\Src$, we have $\sigma(\Gamma_0\prod_{i=1}^{\kappa} (\Gamma_{0,i} \Gamma_{1,i})) = \gamma^{2\kappa+1}$, so each $\sigma(\Gamma_0)$, $\sigma(\Gamma_{h,i})$ only contains characters $\gamma$, and may not contain more than one (since there is no $\gamma\gamma$ outside the vertex section, where the corresponding block is matched in $\Tgt'\splitsIn \Src'$), so it must contain exactly $\gamma$. Similarly, $\sigma(\Gamma'_0)=\sigma(\Gamma'_{h,i})=\gamma$. 
It follows that $\sigma(Z)$ is not empty and starts with $z$, which gives the separation into vertex and edge sections as in the proof of \Cref{prop:W-hard-2eq}.  For each $i$,  string $\sigma(\prod_{j\neq i}X_{i,j})$ only contains characters from vertex and edge sections (so only of the form $v_{i'}$, not $\gamma$ or separators), and $\Tgt$ contains $\gamma \sigma(\prod_{j\neq i}X_{i,j})\gamma $ as a substring, so again $X_{i,j}=w_i$ for some $i$ and every $j\neq i$. Finally,  $w_iw_j$ must be an edge of $G$ for each $i$ and~$j$, so $\{w_1,\ldots,w_{\kappa}\}$ forms a clique of $G$.

\end{proof}

We note that the above reduction requires two distinct target strings (otherwise, allowing empty blocks leads to trivial solutions where a single block is non-empty). However, the two strings differ by a single character inversion, that force each and every other block to be non-empty.

\section{Constant Equation Sizes}\label{sec:constant_c}
We now consider the special case where the right hand side of each equation in the system has only a constant number of blocks. In other words, the case when~$c$ is constant. We show that the case where~$c=2$ can be solved in polynomial time. In fact, we show that a more general case is polynomial-time solvable. To describe this special case, we introduce the following notation.

A block $X_p$ is a \emph{border block} of an equation if it appears as first or as last block in the right hand side of that equation. A block is a border block of a system of equations~$\Ecal$ if it is a border block of at least one equation of $\Ecal$. We say that $\Ecal$ has \emph{only border blocks} if non-border blocks are all jokers. In particular, in such a setting, a border block of $\Ecal$ is a border block of any equation in which it occurs (otherwise, it would both appear twice and be a joker: a contradiction).

\begin{proposition} \label{prop:P-size2} \label{prop:P-border-blocks}
{\sc String Equations} with only border blocks is polynomial-time solvable. In particular, {\sc String Equations}  with equation size $c=2$ can be solved in polynomial time.
\end{proposition}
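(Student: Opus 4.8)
The plan is to reduce {\sc String Equations} with only border blocks to a reachability/shortest-path question on a polynomially-sized auxiliary graph, exploiting the fact that every non-joker block sits at the left or right end of each equation in which it occurs, so "the rest" of each equation is a single contiguous middle piece that can absorb anything a joker is allowed to absorb.

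First I would set up normalizing observations. Since a border block $X$ of $\Ecal$ is a border block of every equation $\Tgt \splitsIn \Src$ containing it, each equation has the shape $\Src = P\, M\, S$ where $P$ is an (at most one-block) prefix consisting of a possible border block, $S$ is an (at most one-block) suffix border block, and $M$ is the middle consisting entirely of jokers; more precisely we can write $\Src = L_1 \cdots L_p\, (\text{jokers})\, R_q \cdots R_1$ but because a border block of an equation is exactly its first or last block, actually at most one block on each side is a non-joker, so $\Src = [X_{\leftpart}]\, (\text{jokers})\, [X_{\rightpart}]$ where the bracketed blocks are optional. A joker appears only once in the whole system and can be assigned an arbitrary non-empty string, so a maximal run of jokers in an equation can match any substring of $\Tgt$ of length at least the number of jokers in that run (and the run is nonempty exactly when the run is nonempty). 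The key reduction step is: an assignment to the border blocks extends to a full solution iff, for every equation $\Tgt\splitsIn\Src$, choosing the value of its (at most two) border blocks leaves a middle window of $\Tgt$ that is long enough to host the jokers of that equation — i.e. the prefix border block is a prefix of $\Tgt$, the suffix border block is a suffix of $\Tgt$, they do not overlap, and the gap between them has length at least the number of jokers (or equals any value if there is at least one joker, at least $0$ if there are jokers — and must be handled carefully when there are zero jokers, in which case prefix and suffix must exactly tile $\Tgt$).

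Next I would build the graph. Each border block $X$ of $\Ecal$ has a set of candidate values: for every equation in which $X$ appears as the prefix block, $X$ must be a prefix of that equation's target, and symmetrically for suffixes; so the candidate set of $X$ is the intersection of the relevant sets of prefixes/suffixes of targets, a set of size $\Oh(\Tgtlen)$ that is computable in polynomial time (a string is a common candidate iff it is a prefix of all the "prefix-targets" of $X$ and a reversed-prefix of all the "suffix-targets", which reduces to taking the shortest target and checking length-by-length). The constraints are local per equation and couple at most two border blocks each, so this is a constraint satisfaction instance whose constraint graph has edges only between block-pairs co-occurring in an equation. I would handle it by case analysis on how many non-joker border blocks an equation has: zero jokers-only equations are trivial (target must be a single known string, or there is a contradiction to propagate); equations with one border block fix a partial constraint; equations with two border blocks $X_{\leftpart}, X_{\rightpart}$ impose the relation "$|\sigma(X_{\leftpart})| + |\sigma(X_{\rightpart})| \le |\Tgt| - (\#\text{jokers})$ and $\sigma(X_{\leftpart})$ is a prefix, $\sigma(X_{\rightpart})$ a suffix" — and crucially, once we fix the length of $\sigma(X_{\leftpart})$, the string is forced (it's that prefix of $\Tgt$), so the whole problem is about choosing, for each border block $X$, a single length $\ell_X$, subject to: $\ell_X$ is consistent across all equations where $X$ is the prefix (the induced string must agree), consistent as a suffix similarly, and for each two-block equation $\ell_{X_{\leftpart}} + \ell_{X_{\rightpart}} \le |\Tgt| - (\#\text{jokers}_{\Tgt})$, plus the zero-joker corner case forcing equality.

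The main obstacle I expect is the coupling when a single border block occurs in many equations, and when equations force a block to be simultaneously a prefix of one target and a suffix of another, or when two distinct border blocks interact through several equations forming long paths/cycles in the constraint graph — naively this looks like it could be as hard as a general CSP. The way I would defeat it: the only genuinely nonlocal information a border block transmits is its \emph{value} (equivalently its \emph{length}, since the string is then forced by any one equation that pins it), so fix each border block by its value; but there may be $\Oh(\Tgtlen)$ choices and $k$ blocks — too many for brute force. However, because each equation constrains at most two blocks, and because the "$\ell_{X}+\ell_{Y}\le c$" constraints are monotone, I would argue we can greedily take each border block as \emph{long as possible} (or as \emph{short as possible}, namely a single letter, for suffix-only appearances) subject to feasibility, and verify in polynomial time; intuitively longer prefix/suffix assignments only make the middle window smaller, so there is a unique "most relaxed" candidate profile to test, and if it fails, no profile works. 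I would make this rigorous by an exchange argument: given any solution, replacing every border block's value by the minimal-length legal value (a single character respecting the common-prefix/common-suffix constraint) never violates any constraint, since it only enlarges every middle window and only shrinks the jokers' mandatory load. The remaining check — that all the forced single-character values are mutually consistent and that zero-joker equations are exactly tiled — is then a direct polynomial-time verification, and the $c=2$ statement follows immediately since then every block is a border block.
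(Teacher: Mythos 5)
Your reduction to a constraint problem over the lengths of the border blocks is the right starting point (and matches the paper's setup), but the key step --- the exchange argument claiming that every border block can be replaced by its minimal-length legal value --- is wrong, and it breaks precisely on the headline case $c=2$. For an equation $\Tgt\splitsIn X_pX_q$ with both blocks non-jokers there is no joker to absorb slack, so the constraint is the \emph{equality} $\ell_p+\ell_q=|\Tgt|$, not a monotone inequality: shrinking $X_p$ forces $X_q$ to \emph{grow}. Concretely, take $\Ecal=\{aaa\splitsIn XY,\; aab\splitsIn X\joker,\; baa\splitsIn \joker Y\}$: the valid lengths for $X$ and for $Y$ are each $\{1,2\}$, the instance is satisfiable (e.g.\ $\sigma(X)=a$, $\sigma(Y)=aa$), but your procedure assigns both non-joker blocks length $1$ and the final verification rejects, a false negative. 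More generally these equality constraints propagate along paths and cycles of the constraint graph (fixing $\ell_p$ determines $\ell_q$, which determines the next block's length, and so on), and deciding consistency of this propagation is the actual content of the problem; there is no single ``most relaxed'' profile because the constraints are not all monotone in the same direction. You acknowledge the ``zero-joker corner case'' but never resolve it, and it is not a corner case: it is the whole difficulty.

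The paper handles this propagation by introducing, for each border block $X_p$ and each threshold $\ell$, a Boolean variable meaning $|\sigma(X_p)|\le\ell$, and expressing every constraint (non-emptiness, monotonicity of the thresholds, validity of a length as a common prefix/suffix, exact tiling for joker-free equations via two opposite inequalities, and the room-for-jokers inequality) as a $1$- or $2$-clause; the resulting 2SAT instance is solved in polynomial time. Your inventory of the constraints (value determined by length, candidate sets computable per block, coupling of at most two blocks per equation) is essentially correct and would feed directly into such an encoding; what is missing is replacing the greedy step by a genuine propagation mechanism of this kind.
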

\begin{proof}

Consider an equation~$\Tgt_i\equiv \Src_i$. We say that the string $\Tgt_i$ \emph{starts} with block $X_p$ if $\Src_i[1]=X_p$. Similarly, $\Tgt_i$ \emph{ends} with $X_p$ if $\Src_i[|\Src_i|]=X_p$).  Some length $\ell\in \mathds{N}$ is \emph{valid} for block $X_p$ if the length-$\ell$ prefixes of all strings starting with $X_p$ and the length-$\ell$ suffixes of all strings ending with $X_p$ are equal (then we denote this common substring $\sigma_\ell(X_p)$).

We let $\{X_1,\ldots,X_{k'}\}\subseteq \Blocks$ denote the border blocks of the instance. Recall that, since the instance has only border blocks, all other blocks are jokers, and border blocks are border blocks in every equation that contains them. Build a 2SAT formula as follows. For each border block $X_p$ and each integer $0\leq \ell\leq \Tgtlen$, introduce a boolean variable denoted $X_p\preceq\ell$ (which corresponds to assigning a string with length at most $\ell$  to $X_p$, its negation is denoted $X_p\succ\ell$). Create formula $\Phi_\Ecal$ using the following 1- and 2-clauses.
\begin{enumerate}
    \item \label{clause:not_empty}For each $X_p$, add clause  $X_p\succ0$.
    \item \label{clause:monotonous} For each $X_p$ and $0\leq  \ell< \ell' \leq n$, add clause  $X_p\preceq\ell \rightarrow  X_p\preceq\ell'$.  
    \item \label{clause:valid_length} For each $X_p$ and $1\leq \ell\leq n$. If $\ell$ is not a valid length for $X_p$, add clause $X_p\preceq\ell \rightarrow  X_p\preceq\ell-1$.    
    \item \label{clause:single_block} For each equation $\Tgt_i\splitsIn \Src_i$ where $\Src_i=X_p$,  add clause  $X_p\succ  |\Tgt_i| - 1$. 
    \item \label{clause:two_blocks} For each equation $\Tgt_i\splitsIn \Src_i$ where $\Src_i=X_pX_q$, and  $0\leq \ell\leq|\Tgt_i|$, add clause  $X_p\preceq \ell \rightarrow  X_q\succ |\Tgt_i| - 1 -\ell$.
  
    \item \label{clause:small_blocks} For each equation $\Tgt_i\splitsIn \Src_i$ where $\Src_i$ starts with $X_p$ and ends with $X_q$, for each $0\leq \ell\leq|\Tgt_i|$, add clause
    $X_p\succ\ell \rightarrow  X_q\preceq |\Tgt_i| - |\Src_i| -\ell+1$.
\end{enumerate}
We then prove the following claim, which completes the proof that $\Ecal$ can be solved in polynomial time:
$$\Ecal \text{ admits a satisfying assignment} \Leftrightarrow \Phi_\Ecal\text{ is satisfiable.} $$

($\Rightarrow$) Consider a satisfying assignment $\sigma$ of $\Ecal$, let $\ell_p$ be the length of $\sigma(X_p)$, and set all variables $X_p\preceq \ell$ to true iff $\ell\geq \ell_p$. We now verify each clause (each time we use the notations from the clause definition, and for clauses of the form $A\rightarrow B$, we assume that $A$ is true and directly prove $B$).
\begin{enumerate}
    \item[\ref{clause:not_empty}.] $\ell_p>0$ so $X_p\succ0$.
    \item[\ref{clause:monotonous}.] We have $\ell_p\leq \ell$ and $\ell<\ell'$ so $\ell_p\leq \ell'$.
    \item[\ref{clause:valid_length}.] $\ell_p$ is a valid length, so $\ell\neq \ell_p$, if $X_p\preceq \ell$, then $\ell_p < \ell$ and $X_p\preceq \ell-1$. 
    \item[\ref{clause:single_block}.] We have $|\Tgt_i|=|\sigma(\Src_i)|=\ell_p$, so $\ell_p>|\Tgt_i-1|$ and $X_p\succ|\Tgt_i|-1$.
    \item[\ref{clause:two_blocks}.] We have $|\Tgt_i|=|\sigma(\Src_i)|=\ell_p+\ell_q$, and $\ell\geq \ell_p$ so $\ell_q\geq |\Tgt_i|-\ell$ and $X_q\succ|\Tgt_i|-1-\ell$.   
    \item[\ref{clause:small_blocks}.] We have $\Tgt_i=\sigma(\Src_i)$, so $|\Tgt_i|\geq \ell_p+\ell_q+|\Src_i|-2$ (blocks of $\Src_i$ have length at most 1, except the first and last that have length $\ell_p$ and $\ell_q$). 
    If $X_p\succ\ell$, then $\ell <\ell_p$, and $\ell_q\leq |\Tgt_i|-\ell_p-|\Src_i|+2 \leq |\Tgt_i|-\ell-|\Src_i|+1$.
        
\end{enumerate}

($\Leftarrow$) For each $p$, let $\ell_p$ be the smallest value of $\ell$ such that $X_p\leq \ell$ is true. In particular, $\ell_p>0$ (since $X_p\preceq 0$ is false, Clause~\ref{clause:not_empty}), and $\ell_p$ is a valid length for $X_p$ (Clause~\ref{clause:valid_length}: otherwise, we would also have $X_p\preceq \ell_p-1$ set to true).
Set $\sigma(X_p)=\sigma_{\ell_p}(X_p)$. Consider an equation $\Tgt_i\splitsIn \Src_i$. By Clause~\ref{clause:valid_length}, if $\Src_i$ starts with $X_p$ and ends with $X_q$, then $\sigma(X_p)$ is a prefix of $\Tgt_i$ and $\sigma(X_q)$ is a suffix of $\Tgt_i$. In particular, $\ell_p,\ell_q\leq |T_i|$.
If $|\Src_i|=1$ ($\Src_i=X_p$), then by Clause~\ref{clause:single_block}, $\ell_p=|\Tgt_i|$, so $\sigma(X_p)=\Tgt_i$. If $\Src_i=X_pX_q$, then by Clauses~\ref{clause:small_blocks} and~\ref{clause:two_blocks} $\ell_p+\ell_q=|\Tgt_i|$, so $\sigma(X_pX_q)=\Tgt_i$. Otherwise ($\Src_i$ contains $|\Src_i|-2\geq 1$ joker blocks),  we have $\ell_p+\ell_q+|\Src_i|-2\leq |\Tgt_i|$ (indeed,  $X_p\succ\ell_p-1$, so $X_q\preceq |\Tgt_i| - |\Src_i| -(\ell_p-1)+1$ by Clause~\ref{clause:small_blocks}, hence $\ell_q \leq |\Tgt_i| - |\Src_i| -\ell_p+2$, which gives the result).
\end{proof}

\begin{theorem}  \label{prop:W-hard-size3}
{\sc String Equations} with duplicate-free equations of size $c=3$ and parameter $r$ is W[1]-hard.
\end{theorem}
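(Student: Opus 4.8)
The target is \Wone-hardness for duplicate-free equations of size $c=3$, parameterized by the number $r$ of equations. The natural approach is to reduce from {\sc Multicolored Clique} on a $\kappa$-partite graph $G=(V=\bigcup_{i=1}^{\kappa}V_i,E)$, so that the number of equations $r$ is a function of $\kappa$ only, while equation size stays at $3$ and each equation is duplicate-free. The key difficulty compared to the earlier reductions (\Cref{prop:W-hard-1eq,prop:W-hard-2eq}) is that with $c=3$ we can no longer place a ``selector'' block and a ``verifier'' block inside the same long equation together with many jokers: each equation now has at most one or two non-joker blocks plus at most one joker. So the work of (i) choosing one vertex per color class and (ii) checking that each of the $\binom{\kappa}{2}$ pairs forms an edge must be distributed across a bounded number of short equations, using shared block variables to transfer information between them.

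\textbf{Key steps.} First I would introduce, for each color $i$, a \emph{vertex-selector} block $X_i$ whose value is forced (by a size-$3$ equation of the form $s_i\,(\text{list of }V_i\text{-vertices interleaved with separators})\,s_i' \splitsIn \joker\, X_i\, \joker$, or more carefully a padded version forcing $\sigma(X_i)\in V_i$) to be a single vertex of $V_i$; this costs $\kappa$ equations. Second, for each unordered pair of colors $\{i,j\}$ I would use a small constant number of size-$3$ equations that, together, are satisfiable iff $\sigma(X_i)\sigma(X_j)$ (in the appropriate orientation, respecting the lexicographic edge order as in \Cref{prop:W-hard-1eq}) is an edge of $G$; the target strings here list the edges between $V_i$ and $V_j$, suitably separated, and the equation reuses the blocks $X_i,X_j$ already constrained. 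The challenge is that a single size-$3$ duplicate-free equation $T\splitsIn X_i\, Y\, X_j$ checks that $\sigma(X_i)$ is a prefix of some substring and $\sigma(X_j)$ a suffix, but not that they bound a common edge occurrence; so I expect to need an auxiliary block per pair that ``locks'' the gap $Y$, plus a companion equation (as in the two-equation reduction) sharing $Y$ to force the two endpoints into the same edge block. Since each pair contributes $\Oh(1)$ equations and there are $\binom{\kappa}{2}$ pairs, the total is $r=\Oh(\kappa^2)$ — a function of $\kappa$, as required — with every equation of size exactly $3$ and duplicate-free.

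\textbf{Correctness.} The forward direction is routine: given a multicolored clique $\{w_1,\dots,w_\kappa\}$ with $w_i\in V_i$, set $\sigma(X_i)=w_i$, choose the pair-gadget blocks to point at the edge $w_iw_j$ in each pair's target, and assign the jokers to the (nonempty) separator gaps. The reverse direction is the part needing care: one argues that the forcing equations pin $\sigma(X_i)$ to a single vertex of $V_i$ (the separators rule out longer or mixed strings, exactly as in the $(\Leftarrow)$ arguments of \Cref{prop:W-hard-1eq,prop:W-hard-2eq}), and then each pair gadget, being satisfiable, witnesses that $\sigma(X_i)\sigma(X_j)\in E$; hence the selected vertices form a $\kappa$-clique with one vertex per color.

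\textbf{Main obstacle.} I expect the real work to be the design and analysis of the pair gadget under the triple constraints \emph{size $3$}, \emph{duplicate-free}, and \emph{information shared only through block variables} — in particular ensuring that the block forced to equal $\sigma(X_i)$ in the selector equation cannot be re-interpreted as a different vertex inside the edge-checking equation (this is precisely the ``lexicographic order of edges'' trick plus a synchronization block à la \Cref{prop:W-hard-2eq}), and verifying that no equation accidentally becomes trivially satisfiable by hiding characters in the size-$3$ slots. Once the gadget is pinned down, bounding $r$ and checking duplicate-freeness are immediate.
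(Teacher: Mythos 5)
Your high-level architecture matches the paper's: reduce from {\sc Multicolored Clique}, use $\kappa$ selector equations of the form $\Tgt_{1,i}\splitsIn \joker\ X_i\ \joker$ (with per-color targets interleaved with separators) to pin each $\sigma(X_i)$ to a single vertex of $V_i$, then spend $\Oh(1)$ size-3 equations per color pair to verify an edge, giving $r=\Oh(\kappa^2)$. However, the piece you explicitly defer as the ``main obstacle'' --- the pair gadget --- is the entire technical content of the proof, and the version you sketch would not work as stated. In a size-3 equation $\Tgt\splitsIn X_i\,Y\,X_j$, the block $X_i$ is a \emph{border} block, so $\sigma(X_i)$ must be a prefix of the whole target $\Tgt$, not of ``some substring''; since $\sigma(X_i)$ is already forced to be a single vertex character, this either fails outright (if the edge-list target begins with a separator) or pins $\sigma(X_i)$ to the left endpoint of the \emph{first} listed edge, which destroys the selection. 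A companion equation ``sharing $Y$'' does not obviously repair this, because you never obtain both endpoints of a single, common edge occurrence aligned with $X_i$ and $X_j$ simultaneously.

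The paper's gadget resolves exactly this with three fresh blocks per pair. An edge block $E_{i,j}$ is constrained by $\Tgt_2\splitsIn \joker\ E_{i,j}\ \joker$ and $\Tgt_3\splitsIn A_{i,j}\,E_{i,j}\,B_{i,j}$, where $\Tgt_2$ and $\Tgt_3$ are two copies of the edge list with \emph{different} separators; being a common substring of both, $\sigma(E_{i,j})$ contains no separator and hence has length at most~2, and the border blocks $A_{i,j},B_{i,j}$ record exactly where it sits inside $\Tgt_3$. Two further equations, $\Tgt_3\splitsIn A_{i,j}\,X_i\,\joker$ and $\Tgt_3\splitsIn \joker\,X_j\,B_{i,j}$, then force $\sigma(X_i)$ to start where $\sigma(E_{i,j})$ starts and $\sigma(X_j)$ to end where it ends; since $\sigma(X_i)\in V_i$, $\sigma(X_j)\in V_j$ and the color classes are disjoint, this yields $\sigma(E_{i,j})=\sigma(X_i)\sigma(X_j)$, a length-2 separator-free substring of the edge list, i.e.\ an edge. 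The positional information is thus transferred through prefix/suffix blocks rather than through the middle slot --- this is the idea missing from your plan, and without a gadget achieving this synchronization the reduction is not established.
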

\begin{proof}

\begin{figure}
\begin{align*}
\Tgt_{1,r} &:= x \ \ur ab\ x    & \Tgt_{1,g} &:= x\ \ug c\ x & \Tgt_{1,b} &:= x\ \ub d\ x  \\
 \Tgt_{1,r}&\splitsIn \joker\ \ur R\ \joker    & 
 \Tgt_{1,g} &\splitsIn \joker\ \ug G\ \joker & 
 \Tgt_{1,b} &\splitsIn \joker\ \ub B\ \joker  \\[.5em]
  \Tgt_2 &:= \mathrlap{y\  a\,c\ y\  a\,d\ y\  b\,c\ y\  b\,d\ y\  c\,d\ y}\\
  \Tgt_2&\splitsIn  \joker\ E_{rg}\ \joker  
& \Tgt_2&\splitsIn  \joker\ E_{rb}\ \joker  
& \Tgt_2&\splitsIn  \joker\ E_{gb}\ \joker  
\\[.5em]
  \Tgt_3 &:= \mathrlap{z\  \ur a\,\ug c\ z\  \ur a\,\ub d\ z\  b\, c\ z\  b\,d\ z\  \ug c\,\ub d\ z}\\
  \Tgt_3&\splitsIn  A_{rg}\ E_{rg}\ B_{rg}
& \Tgt_3&\splitsIn  A_{rb}\ E_{rb}\ B_{rb}  
& \Tgt_3&\splitsIn  A_{gb}\ E_{gb}\ B_{gb}\\
  \Tgt_3&\splitsIn  A_{rg}\ \ur R\quad \joker
& \Tgt_3&\splitsIn  A_{rb}\ \ur R\quad   \joker
& \Tgt_3&\splitsIn  A_{gb}\ \ug G\quad \joker\\
  \Tgt_3&\splitsIn  \quad \joker\quad \ug G\ B_{rg}
& \Tgt_3&\splitsIn \quad \joker\quad \ub B\ B_{rb}  
& \Tgt_3&\splitsIn \quad \joker\quad \ub B\ B_{gb}
\end{align*}
\caption{\label{fig:W-hard-size3} Reduction of \Cref{prop:W-hard-size3} (using equations of size $c=3$) applied to the graph from \Cref{fig:graph}. 
}
\end{figure}

See \Cref{fig:W-hard-size3}. Consider an instance $(G=(V,E), \kappa)$ of {\sf Multi-Colored Clique}.  Introduce three separators  denoted $x,y,z$, let $\Sigma=V\cup\{x,y,z\}$. Introduce $\kappa+3{\kappa \choose 2}$ non-joker blocks $\{X_i\mid 1\leq i\leq \kappa\}$ and $\{E_{i,j},A_{i,j},B_{i,j} \mid 1\leq i <j\leq \kappa\}$, and $2\kappa + 4{\kappa \choose 2}$ joker blocs (for a total of $k=\Oh( \kappa^2)$ blocks). The equation system is defined as follows.

\begin{align*}
    \Ecal&:=\{\Tgt_{1,i}\splitsIn \Src_{1,i} \mid 1\leq i\leq \kappa \} \\
     &\cup \{\Tgt_2\splitsIn \Src_{2,i,j} \mid 1\leq i< j \leq \kappa \}\\
     &\cup \{\Tgt_3\splitsIn \Src_{3,i,j}, \Tgt_3\splitsIn \Src_{4,i,j},  \Tgt_3\splitsIn \Src_{5,i,j} \mid 1\leq i<j\leq \kappa \}  
\end{align*}
\begin{align*}
     \Tgt_{1,i}&:=x\prod_{v_j\in V_i} (v_i y) &
     \Src_{1,i}&:= \joker \ \ X_i \ \  \joker 
     \\
     \Tgt_2&:=y\prod_{j=1}^m (e_j y)
     &
     \Src_{2,i,j}&:= \joker  \ \  E_{i,j} \ \  \joker 
     \\     
     &&     \Src_{3,i,j}&:=  A_{i,j} E_{i,j} B_{i,j} \\
     \Tgt_{3}&:=\smash{ z\prod_{j=1}^m (e_j z)}
     &\Src_{4,i,j}&:=  A_{i,j} X_{i} \quad \joker \\
     &&\Src_{5,i,j}&:=  \quad \joker\quad   X_{j} B_{i,j} 
\end{align*}
If $G$ contains a clique, let $\sigma(X_i)$ and $\sigma(E_{i,j})$ correspond respectively to the selected vertex in component $i$ and to the selected edge between  components $i$ and $j$. In particular, $\sigma(X_i)$ appears in $\Tgt_{1,i}$ and $\sigma(E_{i,j})$ appears in $\Tgt_{2}$ and $\Tgt_3$ (using $\Src_{3,i,j}$, $\sigma(A_{i,j}E_{i,j})$ is a prefix of $\Tgt_3$ and $\sigma(E_{i,j}B_{i,j})$ is a suffix of $\Tgt_3$).
Moreover, $\sigma(E_{i,j)=\sigma(X_i X_j}$, 
so $\sigma(A_{i,j}X_{i})$ is a prefix of $\Tgt_3$ and $\sigma(X_{j}B_{i,j})$ is a suffix of $\Tgt_3$, satisfying equations for $\Src_{4,i,j}$ an $\Src_{5,i,j}$, and $\sigma$ is a satisfying assignment of $\Ecal$.

Assume now that $\Ecal$ admits a satisfying assignment $\sigma$. Note that $\sigma(X_i)$ is a common substring of $\Tgt_1$ and $\Tgt_3$, so it has length 1 and does not contain any separator (there are no common separators in $\Tgt_1$ and $\Tgt_3$ and no length-2 substring of $\Tgt_3i$ uses two vertices from the same subset $V_i$). Let $v_i=\sigma(X_i)$. Similarly $\sigma(E_{i,j})$ does not contain any separator and has length at most 2 (note that if it has length 2, then it is an edge of $G$). Using $\Src_{3,i,j}$ and $\Src_{4,i,j}$, $\sigma(E_{i,j})$  and $\sigma(X_{i})$ must start with the same character, $v_i$. Similarly, $\sigma(E_{i,j})$  and $\sigma(X_{j})$ must end with the same character $v_j$. Since$v_i\in V_i$, $v_j\in V_j$ and $V_i,V_j$ are disjoint for $i\neq j$, we have $\sigma(E_{i,j})=v_iv_j$, i.e. there exist an edge $\{v_i,v_j\}$ in $G$, and $\{v_1,\ldots,v_{\kappa}\}$ form a clique of $G$.
\end{proof}

\begin{theorem} \label{prop:W-hard-all-but-one-size2}
{\sc String Equations} with $r-1$ size-$2$ equations and one size-$c$ duplicate-free equation is \Wone-hard for  parameter $r+c$.
\end{theorem}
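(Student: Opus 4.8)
The plan is to reduce from {\sc Multicolored Clique} and reuse the architecture of \Cref{prop:W-hard-size3}, pushing all of the "size-$3$" work into the single permitted size-$c$ equation, while keeping every other equation of size $2$. Recall that in the proof of \Cref{prop:W-hard-size3} the only equations that truly needed three blocks were the ones of the form $\Tgt_3\splitsIn A_{i,j}E_{i,j}B_{i,j}$, used to force $\sigma(E_{i,j})$ to be simultaneously consistent with a prefix choice (through $A_{i,j}$) and a suffix choice (through $B_{i,j}$) of the common target $\Tgt_3$. The idea is to collect all $\binom{\kappa}{2}$ edge-blocks $E_{i,j}$ together with their flanking blocks into one long duplicate-free equation over a dedicated target string, so that $c=\Oh(\kappa^2)$ — still a function of the parameter — and every remaining equation has size exactly $2$.

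Concretely, I would build a target string $\Tgt^\ast$ that is a concatenation of $\binom{\kappa}{2}$ independent "edge gadgets", one per color pair $(i,j)$, each gadget being a copy of the edge list $z\prod_{\ell}(e_\ell z)$ over a \emph{fresh} separator alphabet $z_{i,j}$ and fresh vertex-copy alphabet, padded on both ends with fresh padding characters. The big equation reads $\Tgt^\ast\splitsIn \prod_{i<j}(P_{i,j}\,A_{i,j}\,E_{i,j}\,B_{i,j}\,Q_{i,j})$ where $P_{i,j},Q_{i,j}$ are joker "buffer" blocks and the blocks $A_{i,j},E_{i,j},B_{i,j}$ appear nowhere else inside this equation (duplicate-freeness). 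The fresh separators guarantee that the block triple of gadget $(i,j)$ can only be matched inside gadget $(i,j)$ of $\Tgt^\ast$ — this is the standard "each gadget has its own alphabet" trick — so the single long equation behaves exactly like the family of size-$3$ equations $\Tgt_3\splitsIn A_{i,j}E_{i,j}B_{i,j}$ did before. The remaining size-$2$ equations are then: $\Tgt_{1,i}\splitsIn \joker X_i$ (a border-block equation forcing $\sigma(X_i)$ to be a single vertex of color $i$), $\Tgt^{E}_{i,j}\splitsIn \joker E_{i,j}$ forcing $\sigma(E_{i,j})$ to be a genuine edge, and the coupling equations analogous to $\Src_{4,i,j},\Src_{5,i,j}$, namely $\Tgt_3^{i,j}\splitsIn A_{i,j}X_i$ and ${\Tgt_3^{i,j}}'\splitsIn X_j B_{i,j}$ over suitable targets; each of these has exactly two blocks. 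As before, the chain "$A_{i,j}$ fixes a prefix of the gadget target" + "$A_{i,j}X_i$ is also a prefix of the same target" forces the first character of $\sigma(E_{i,j})$ to equal $\sigma(X_i)$, and symmetrically on the right, so $\sigma(E_{i,j})=\sigma(X_i)\sigma(X_j)$ must be an edge; since the $V_i$ are disjoint this yields a multicolored clique, and conversely a clique gives an assignment.

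For the correctness argument I would follow the template of \Cref{prop:W-hard-size3} verbatim: the forward direction is a direct assignment from the clique (set $\sigma(X_i)$ to the selected vertex, $\sigma(E_{i,j})$ to the selected edge, let the jokers $P_{i,j},Q_{i,j}$ absorb the padding, and split the flanking blocks $A_{i,j},B_{i,j}$ around the selected edge occurrence in each gadget); the backward direction uses the fresh-alphabet isolation to argue that each block lives in its intended place, that separators cannot appear inside $\sigma(X_i)$ or $\sigma(E_{i,j})$, that these strings have length $1$ and $\le 2$ respectively, and then the prefix/suffix coupling pins them down exactly as in the earlier proof. Finally $r=1+\kappa+\binom{\kappa}{2}+2\binom{\kappa}{2}=\Oh(\kappa^2)$ and $c=\Oh(\kappa^2)$, so $r+c$ is bounded by a function of $\kappa$, giving the \Wone-hardness; and the only non-joker duplications are of the "once per side" kind that a border-block check permits, with the single size-$c$ equation itself duplicate-free.

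The main obstacle I anticipate is making the single big equation genuinely force one independent constraint per color pair rather than allowing the solver to "shift" blocks across gadget boundaries. This is what the fresh per-gadget separator alphabet and the joker buffers $P_{i,j},Q_{i,j}$ are for, but the bookkeeping — showing that $\sigma(P_{i,j})$ and $\sigma(Q_{i,j})$ are forced to cover exactly the $(i,j)$-padding and nothing that belongs to a neighboring gadget, even though jokers are unconstrained in length — needs a careful counting/length argument (e.g. first pin down the separators $z_{i,j}$, whose number in $\Tgt^\ast$ equals their number forced by the block structure, then conclude the gadget boundaries are respected). A secondary subtlety is that the coupling equations $A_{i,j}X_i$ and $X_jB_{i,j}$ must be posed over target strings that are prefixes/suffixes of the corresponding gadget of $\Tgt^\ast$ (or of equivalent fresh copies), so that "being a prefix of the target" means the same thing in the size-$2$ equation as it does inside the big equation; choosing those targets correctly is where most of the routine care goes.
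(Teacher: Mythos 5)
Your high-level plan (one big duplicate-free equation doing the edge verification, size-2 equations doing the selection and coupling) matches the shape of the paper's reduction, but the concrete gadgets you propose do not force what you claim, and the gap is structural rather than a matter of bookkeeping. In a size-2 equation $\Tgt\splitsIn AB$ both blocks are border blocks: the only information such an equation can encode is a split position of $\Tgt$, with $\sigma(A)$ the corresponding prefix and $\sigma(B)$ the corresponding suffix. Consequently $\Tgt_{1,i}\splitsIn \joker\, X_i$ does not force $\sigma(X_i)$ to be a single vertex of color $i$ --- it only forces $\sigma(X_i)$ to be some nonempty proper suffix of $\Tgt_{1,i}$ --- and likewise $\joker\, E_{i,j}$ does not force $\sigma(E_{i,j})$ to be an edge. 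The middle-block selection trick $\joker\,X\,\joker$ from \Cref{prop:W-hard-size3} intrinsically needs three blocks, and your size-2 truncations of it lose the forcing. The coupling also breaks: in \Cref{prop:W-hard-size3} the equation $A_{i,j}E_{i,j}B_{i,j}$ exactly tiles $\Tgt_3$, so $A_{i,j}$ is anchored as a prefix and $A_{i,j}X_i\,\joker$ can transfer that anchor to $X_i$; in your big equation $A_{i,j}$ is preceded by the joker buffer $P_{i,j}$, so it is not anchored at all, and your size-2 equation $A_{i,j}X_i$ (now an exact cover of its target rather than a prefix condition) shares no reference point with it. Pinning the buffers down ``by counting separators'' is circular, since $P_{i,j},Q_{i,j}$ occur nowhere else and nothing restricts their content.

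The paper resolves exactly this difficulty by embracing the split-position semantics instead of fighting it: a vertex $w$ is encoded as the position at which $\Tgt_v=x\,v_1\cdots v_n$ splits into $\leftpart(w)\rightpart(w)$, so the size-2 equation $\Tgt_v\splitsIn X_iX_i'$ selects a vertex, the size-2 equations $\Tgt_v\splitsIn X_{i,j}X_i'$ and $\Tgt_v\splitsIn X_jX'_{j,i}$ copy the two halves into the edge blocks, and the single size-$c$ equation checks that each concatenation $\sigma(X_{i,j}X'_{j,i})=\leftpart(w_i)\rightpart(w_j)$ occurs in a target listing the strings $\leftpart(u)\rightpart(v)$ over all edges $uv$; the injectivity of this encoding is established by a separate small claim. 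To salvage your approach you would essentially have to reinvent this prefix/suffix encoding; as written, the backward direction of your reduction does not go through.
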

\begin{proof}

\begin{figure}
\begin{align*}
\Tgt_{v} &:=\ugd{\ur x\,\ a\,b}\,\ugd{c\,\ub{d}} \\
\Tgt_v&\splitsIn  R\,R' &  
\Tgt_v&\splitsIn {G}\,{G'} &
\Tgt_v&\splitsIn B\,{B'} \\
\Tgt_v&\splitsIn \ur{R_g}\,R' & 
\Tgt_v&\splitsIn \ug{G_b}\,{G'} & 
\Tgt_v&\splitsIn B\,\ub{B_r'} \\
\Tgt_v&\splitsIn \ur{R_b}\,R' &
\Tgt_v&\splitsIn {G}\,\ug{G_r'} & 
\Tgt_v&\splitsIn B\,\ub{B_g'}  \\
\Tgt_{l} &:=\mathrlap{y\ \ur{x}\,\ug{cd}\ y\ \ur{x}\,\ub{d}\ y\ xacd\ y\ xad\ y\ \ug{xab}\,\ub{d}\ y}\\
\Tgt_l &\splitsIn \mathrlap{\joker\  \ur{R_g}\,\ug{G'_r}\ \joker \ \ur{R_b}\,\ub{B'_r}\ \joker \ \ug{G_b}\,\ub{B'_g}\ \joker}
\end{align*}
\caption{\label{fig:W-hard-all-but-one-size2} Reduction of \Cref{prop:W-hard-all-but-one-size2} (using $r-1$ size-2 equations and one size-$c$ duplicate-free equation) applied to the graph from \Cref{fig:graph}. }
\end{figure}

See \Cref{fig:W-hard-all-but-one-size2}. Consider an instance $(G=(V,E), \kappa)$ of {\sf Clique}.  Introduce additional characters $x$ and $y$, and let $\Sigma=V\cup\{x,y\}$. 
Given a vertex $v_i\in V$, define the (non-empty) strings $\leftpart(v_i):=x \prod_{i'=1}^{i-1} v_{i'}$ and $\rightpart(v_i):= \prod_{i'=i}^{n} v_{i'}$. We first prove the following property on these functions: 
\begin{claim}\label{claim:leftrightparts}
If $v_i,v_j,v_{i'},v_{j'}$ are vertices with $i<j$ and $ \leftpart(v_i)\rightpart(v_j)=\leftpart(v_{i'})\rightpart(v_{j'})$, then $i=i'$ and $j=j'$.
\end{claim}
\begin{proof}Character $v_i$ does not appear in  $ \leftpart(v_i)\rightpart(v_j)$, so $i'\leq i <j'$. Furthermore, $v_{i-1}$ (or $x$ if $i=0$) appears in $ \leftpart(v_i)\rightpart(v_j)$ but may not appear in $\rightpart(v_{j'})$, so it appears in $\leftpart(v_{i'})$ and $i = i'$. Removing the common prefix yields $\rightpart(v_j)=\rightpart(v_{j'})$, and $j=j'$.
 \end{proof}

Introduce $2\kappa+2{\kappa \choose 2}$ non-joker blocks $\{X_i\mid 1\leq i\leq \kappa\}$ and $\{X_{i,j},X'_{j,i} \mid 1\leq i <j\leq \kappa\}$, and ${\kappa \choose 2}+1$ joker blocs (for a total of $k=\Oh( \kappa^2)$ blocks). The equation system is defined as follows.

\begin{align*}
    \Ecal&\mathrlap{:=\{\Tgt_{v}\splitsIn \Src_{i} \mid 1\leq i\leq \kappa \} } \\
     &\mathrlap{\cup\  \{\Tgt_v\splitsIn \Src_{1,i,j},\Tgt_v\splitsIn \Src_{2,i,j}  \mid 1\leq i< j \leq \kappa \}}\\
     &\mathrlap{\cup\  \{\Tgt_e\splitsIn \Src_e \},       \text{ where: } }\\
     \Tgt_v&:=x\ \prod_{i=1}^{n} v_i     &  \Tgt_e&:= y\ \prod_{uv \in E} (\leftpart(u)\rightpart(v)\ y) \\
     \Src_{i}&:= X_i \ \ \ X_i' &
     \Src_{e}&:=   \smash{\joker\ \prod_{1\leq i<j\leq \kappa} (X_{i,j}X'_{j,i} } \ \joker)\\
     \Src_{1,i,j}&:= X_{i,j} \ X_i'\\
     \Src_{2,i,j}&:= X_j\ X'_{j,i}\\
\end{align*}

If $G$ contains a clique $K=\{w_1,\ldots, w_\kappa\}$, let $\sigma(X_i):=\leftpart(w_i)$ and $\sigma(X_i'):=\rightpart(w_i)$. For each $i<j$, let $\sigma(X_{i,j})=X_i$ and $\sigma(X'_{i,j})=X_j'$. Since, for each $w_i$, we have $\Tgt_v=\leftpart(w_i)\rightpart(w_i)$, the equations $\Tgt_v\splitsIn \Src_i$,  $\Tgt_v\splitsIn \Src_{1,i,j}$ and  $\Tgt_v\splitsIn \Src_{2,i,j}$ are satisfied. 
Furthermore, since $K$ is a clique, for any $i<j$ we have $\sigma(X_{i,j}X'_{j,i})=\leftpart(u)\rightpart(v)$ for some edge $uv$ of $G$, so  strings $\sigma(X_{i,j}X'_{j,i})$ are indeed substrings of $\Tgt_e$, in the same order as in $\Src_e$, so $\Tgt_e\splitsIn \Src_e$ is satisfied as well. 

Assume now that $\Ecal$ admits a satisfying assignment $\sigma$. Note that by $\Tgt_s\splitsIn \Src_i$, $\sigma(X_i)=\leftpart(w_i)$ and $\sigma(X'_i)=\rightpart(w_i)$ for some vertex $w_i\in V$. Using other equations with $\Tgt_s$, we get $\sigma(X_{i,j})=\sigma(X_i)$ and $\sigma(X'_{j,i})=\sigma(X'_j)$ for each $i<j$. For each $i<j$, write $E_{i,j}=\sigma(X_{i,j}X'_{j,i})$. Then $E_{i,j}$ is a substring of $\Tgt_e$ starting with $x$, ending with $v_n$, and not containing any occurrence of $y$, so $E_{i,j}=\leftpart(u)\rightpart(v)$ for some edge $uv$ of $G$. By Claim~\ref{claim:leftrightparts}, since $E_{i,j}=\leftpart(w_i)\rightpart(w_j)$, we have $u=w_i$ and $v=w_j$, so $w_iw_j$ is an edge for each $i<j$: $\{w_1,\ldots,w_{\kappa}\}$ is a clique of $G$.
\end{proof}

\section{Open Questions} 

We identified and focused on parameters $k$, $r$, $c$ and $d$ for our multivariate study of the {\sc String Equations} and {\sc String Equations with Deletions} problems. Two questions remain open, both in the deletions setting. Can the \XP-algorithm for \textsc{String Equations with Deletions} parameterized by~$k+d$ (\Cref{prop:del+k-XP}) be improved to an algorithm with running time $f(d)\cdot \Tgtlen^{f(k)}$, that is, to an \FPT-algorithm for~$d$ when~$k$ is constant? Similarly, is there an $f(d) \Tgtlen^{\Oh(1)}$-time algorithm for \textsc{String Equations with Deletions} with size-2 equations (or, even better, equations with only border blocks of arbitrary size)? 

We reserve the study of other parameters for future works: possible candidates could be the alphabet size $|\Sigma|$, the size of the target string $\Tgtlen$, or the maximum length of a string assigned to a block $|\sigma(X_i)|$. However, most combinations bring hardness for the related {\sc String Morphism} problem~\cite{FSV16}. 

Finally, another variant would allow undefined characters (question marks) in the target strings. Does this introduce an additional difficulty, and can our algorithms be extended to this setting?

\bibliography{string-equations}

\end{document}